
\documentclass{article}
\usepackage[latin9]{inputenc}
\usepackage{color,soul,verbatim}
\usepackage{amsmath}
\usepackage{amssymb}
\usepackage{amsthm}
\usepackage{amscd}
\usepackage{amsfonts}

\setcounter{MaxMatrixCols}{10}

\makeatletter
\newtheorem{theorem}{Theorem}

\newtheorem{axiom}[theorem]{Axiom}

\newtheorem{conjecture}[theorem]{Conjecture}
\newtheorem{corollary}[theorem]{Corollary}

\newtheorem{definition}[theorem]{Definition}
\newtheorem{example}[theorem]{Example}
\newtheorem{exercise}[theorem]{Exercise}
\newtheorem{lemma}[theorem]{Lemma}

\newtheorem{proposition}[theorem]{Proposition}
\newtheorem{remark}[theorem]{Remark}

\typeout{TCILATEX Macros for Scientific Word 2.5 <22 Dec 95>.}
\typeout{NOTICE:  This macro file is NOT proprietary and may be 
freely copied and distributed.}
\makeatletter
%
\newcount\@hour\newcount\@minute\chardef\@x10\chardef\@xv60
\def\tcitime{
\def\@time{%
  \@minute\time\@hour\@minute\divide\@hour\@xv
  \ifnum\@hour<\@x 0\fi\the\@hour:%
  \multiply\@hour\@xv\advance\@minute-\@hour
  \ifnum\@minute<\@x 0\fi\the\@minute
  }}%

\@ifundefined{hyperref}{}{}

\@ifundefined{qExtProgCall}{\def\qExtProgCall#1#2#3#4#5#6{\relax}}{}
%
%
%
%
\def\QCTOpt[#1]#2{%
  \def\QCTOptB{#1}
  \def\QCTOptA{#2}
}
\def\QCTNOpt#1{%
  \def\QCTOptA{#1}
  \let\QCTOptB\empty
}
\def\Qct{%
  \@ifnextchar[{%
    \QCTOpt}{\QCTNOpt}
}
\def\QCBOpt[#1]#2{%
  \def\QCBOptB{#1}
  \def\QCBOptA{#2}
}
\def\QCBNOpt#1{%
  \def\QCBOptA{#1}
  \let\QCBOptB\empty
}
\def\Qcb{%
  \@ifnextchar[{%
    \QCBOpt}{\QCBNOpt}
}
\def\PrepCapArgs{%
  \ifx\QCBOptA\empty
    \ifx\QCTOptA\empty
      {}%
    \else
      \ifx\QCTOptB\empty
        {\QCTOptA}%
      \else
        [\QCTOptB]{\QCTOptA}%
      \fi
    \fi
  \else
    \ifx\QCBOptA\empty
      {}%
    \else
      \ifx\QCBOptB\empty
        {\QCBOptA}%
      \else
        [\QCBOptB]{\QCBOptA}%
      \fi
    \fi
  \fi
}
\newcount\GRAPHICSTYPE
\GRAPHICSTYPE=\z@
\def\GRAPHICSPS#1{%
 \ifcase\GRAPHICSTYPE
   \special{ps: #1}%
 \or
   \special{language "PS", include "#1"}%
 \fi
}%
%
%
%
\def\graffile#1#2#3#4{%
    \leavevmode
    \raise -#4 \BOXTHEFRAME{%
        \hbox to #2{\raise #3\hbox to #2{\null #1\hfil}}}%
}%
%
\def\draftbox#1#2#3#4{%
 \leavevmode\raise -#4 \hbox{%
  \frame{\rlap{\protect\tiny #1}\hbox to #2%
   {\vrule height#3 width\z@ depth\z@\hfil}%
  }%
 }%
}%
\newcount\draft
\draft=\z@

\newif\ifwasdraft
\wasdraftfalse

\def\GRAPHIC#1#2#3#4#5{%
 \ifnum\draft=\@ne\draftbox{#2}{#3}{#4}{#5}%
  \else\graffile{#1}{#3}{#4}{#5}%
  \fi
 }%
\def\addtoLaTeXparams#1{%
    \edef\LaTeXparams{\LaTeXparams #1}}%
%

\newif\ifBoxFrame \BoxFramefalse
\newif\ifOverFrame \OverFramefalse
\newif\ifUnderFrame \UnderFramefalse

\def\BOXTHEFRAME#1{%
   \hbox{%
      \ifBoxFrame
         \frame{#1}%
      \else
         {#1}%
      \fi
   }%
}

\def\doFRAMEparams#1{\BoxFramefalse\OverFramefalse\UnderFramefalse\readFRAMEparams#1\end}%
\def\readFRAMEparams#1{%
 \ifx#1\end%
  \let\next=\relax
  \else
  \ifx#1i\dispkind=\z@\fi
  \ifx#1d\dispkind=\@ne\fi
  \ifx#1f\dispkind=\tw@\fi
  \ifx#1t\addtoLaTeXparams{t}\fi
  \ifx#1b\addtoLaTeXparams{b}\fi
  \ifx#1p\addtoLaTeXparams{p}\fi
  \ifx#1h\addtoLaTeXparams{h}\fi
  \ifx#1X\BoxFrametrue\fi
  \ifx#1O\OverFrametrue\fi
  \ifx#1U\UnderFrametrue\fi
  \ifx#1w
    \ifnum\draft=1\wasdrafttrue\else\wasdraftfalse\fi
    \draft=\@ne
  \fi
  \let\next=\readFRAMEparams
  \fi
 \next
 }%
%

\def\IFRAME#1#2#3#4#5#6{%
      \bgroup
      \let\QCTOptA\empty
      \let\QCTOptB\empty
      \let\QCBOptA\empty
      \let\QCBOptB\empty
      #6%
      \parindent=0pt%
      \leftskip=0pt
      \rightskip=0pt
      \setbox0 = \hbox{\QCBOptA}%
      \@tempdima = #1\relax
      \ifOverFrame
          \typeout{This is not implemented yet}%
          \show\HELP
      \else
         \ifdim\wd0>\@tempdima
            \advance\@tempdima by \@tempdima
            \ifdim\wd0 >\@tempdima
               \textwidth=\@tempdima
               \setbox1 =\vbox{%
                  \noindent\hbox to \@tempdima{\hfill\GRAPHIC{#5}{#4}{#1}{#2}{#3}\hfill}\\%
                  \noindent\hbox to \@tempdima{\parbox[b]{\@tempdima}{\QCBOptA}}%
               }%
               \wd1=\@tempdima
            \else
               \textwidth=\wd0
               \setbox1 =\vbox{%
                 \noindent\hbox to \wd0{\hfill\GRAPHIC{#5}{#4}{#1}{#2}{#3}\hfill}\\%
                 \noindent\hbox{\QCBOptA}%
               }%
               \wd1=\wd0
            \fi
         \else
            \ifdim\wd0>0pt
              \hsize=\@tempdima
              \setbox1 =\vbox{%
                \unskip\GRAPHIC{#5}{#4}{#1}{#2}{0pt}%
                \break
                \unskip\hbox to \@tempdima{\hfill \QCBOptA\hfill}%
              }%
              \wd1=\@tempdima
           \else
              \hsize=\@tempdima
              \setbox1 =\vbox{%
                \unskip\GRAPHIC{#5}{#4}{#1}{#2}{0pt}%
              }%
              \wd1=\@tempdima
           \fi
         \fi
         \@tempdimb=\ht1
         \advance\@tempdimb by \dp1
         \advance\@tempdimb by -#2%
         \advance\@tempdimb by #3%
         \leavevmode
         \raise -\@tempdimb \hbox{\box1}%
      \fi
      \egroup%
}%
%
\def\DFRAME#1#2#3#4#5{%
 \begin{center}
     \let\QCTOptA\empty
     \let\QCTOptB\empty
     \let\QCBOptA\empty
     \let\QCBOptB\empty
     \ifOverFrame 
        #5\QCTOptA\par
     \fi
     \GRAPHIC{#4}{#3}{#1}{#2}{\z@}
     \ifUnderFrame 
        \nobreak\par #5\QCBOptA
     \fi
 \end{center}%
 }%
%
\def\FFRAME#1#2#3#4#5#6#7{%
 \begin{figure}[#1]%
  \let\QCTOptA\empty
  \let\QCTOptB\empty
  \let\QCBOptA\empty
  \let\QCBOptB\empty
  \ifOverFrame
    #4
    \ifx\QCTOptA\empty
    \else
      \ifx\QCTOptB\empty
        \caption{\QCTOptA}%
      \else
        \caption[\QCTOptB]{\QCTOptA}%
      \fi
    \fi
    \ifUnderFrame\else
      \label{#5}%
    \fi
  \else
    \UnderFrametrue%
  \fi
  \begin{center}\GRAPHIC{#7}{#6}{#2}{#3}{\z@}\end{center}%
  \ifUnderFrame
    #4
    \ifx\QCBOptA\empty
      \caption{}%
    \else
      \ifx\QCBOptB\empty
        \caption{\QCBOptA}%
      \else
        \caption[\QCBOptB]{\QCBOptA}%
      \fi
    \fi
    \label{#5}%
  \fi
  \end{figure}%
 }%
%
%
%
%
%
\newcount\dispkind%

\def\makeactives{
  \catcode`\"=\active
  \catcode`\;=\active
  \catcode`\:=\active
  \catcode`\'=\active
  \catcode`\~=\active
}
\bgroup
   \makeactives
   \gdef\activesoff{%
      \def"{\string"}
      \def;{\string;}
      \def:{\string:}
      \def'{\string'}
      \def~{\string~}
    }
\egroup

\def\FRAME#1#2#3#4#5#6#7#8{%
 \bgroup
 \@ifundefined{bbl@deactivate}{}{\activesoff}
 \ifnum\draft=\@ne
   \wasdrafttrue
 \else
   \wasdraftfalse%
 \fi
 \def\LaTeXparams{}%
 \dispkind=\z@
 \def\LaTeXparams{}%
 \doFRAMEparams{#1}%
 \ifnum\dispkind=\z@\IFRAME{#2}{#3}{#4}{#7}{#8}{#5}\else
  \ifnum\dispkind=\@ne\DFRAME{#2}{#3}{#7}{#8}{#5}\else
   \ifnum\dispkind=\tw@
    \edef\@tempa{\noexpand\FFRAME{\LaTeXparams}}%
    \@tempa{#2}{#3}{#5}{#6}{#7}{#8}%
    \fi
   \fi
  \fi
  \ifwasdraft\draft=1\else\draft=0\fi{}%
  \egroup
 }%
%

\def\TEXUX#1{"texux"}

%
%
%
%
%
%
%
\def\func#1{\mathop{\rm #1}}%
%

%
\long\def\QQQ#1#2{%
     \long\expandafter\def\csname#1\endcsname{#2}}%
\@ifundefined{QTP}{\def\QTP#1{}}{}
\@ifundefined{QEXCLUDE}{\def\QEXCLUDE#1{}}{}
\@ifundefined{Qlb}{}{}
\@ifundefined{Qlt}{}{}
\long\def\QQA#1#2{}%
\def\QTR#1#2{{\csname#1\endcsname #2}}
\def\EXPAND#1[#2]#3{}%
\def\NOEXPAND#1[#2]#3{}%
\def\LaTeXparent#1{}%
\def\ChildStyles#1{}%
\def\ChildDefaults#1{}%
\def\QTagDef#1#2#3{}%
%
\@ifundefined{StyleEditBeginDoc}{}{}
%
\def\QQfnmark#1{\footnotemark}

%
\def\makeatletter\input gnuindex.sty\makeatother\makeindex{\makeatletter\input gnuindex.sty\makeatother\makeindex}%
\@ifundefined{INDEX}{\def\INDEX#1#2{}{}}{}%
\@ifundefined{SUBINDEX}{\def\SUBINDEX#1#2#3{}{}{}}{}%
\@ifundefined{initial}%
   {\def\initial#1{\bigbreak{\raggedright\large\bf #1}\kern 2\p@\penalty3000}}%
   {}%
\@ifundefined{entry}{}{}%
\@ifundefined{primary}{}{}%
\@ifundefined{secondary}{}{}%
\@ifundefined{ZZZ}{}{\makeatletter\input gnuindex.sty\makeatother\makeindex\makeatletter}%
%
\@ifundefined{abstract}{%
 \def\abstract{%
  \if@twocolumn
   \section*{Abstract (Not appropriate in this style!)}%
   \else \small 
   \begin{center}{\bf Abstract\vspace{-.5em}\vspace{\z@}}\end{center}%
   \quotation 
   \fi
  }%
 }{%
 }%
\@ifundefined{endabstract}{\def\endabstract
  {\if@twocolumn\else\endquotation\fi}}{}%
\@ifundefined{maketitle}{\def\maketitle#1{}}{}%
\@ifundefined{affiliation}{\def\affiliation#1{}}{}%
\@ifundefined{proof}{}{}%
\@ifundefined{endproof}{}{}%
\@ifundefined{newfield}{\def\newfield#1#2{}}{}%
\@ifundefined{chapter}{\def\chapter#1{\par(Chapter head:)#1\par }%
 \newcount\c@chapter}{}%
\@ifundefined{part}{\def\part#1{\par(Part head:)#1\par }}{}%
\@ifundefined{section}{\def\section#1{\par(Section head:)#1\par }}{}%
\@ifundefined{subsection}{\def\subsection#1%
 {\par(Subsection head:)#1\par }}{}%
\@ifundefined{subsubsection}{\def\subsubsection#1%
 {\par(Subsubsection head:)#1\par }}{}%
\@ifundefined{paragraph}{\def\paragraph#1%
 {\par(Subsubsubsection head:)#1\par }}{}%
\@ifundefined{subparagraph}{\def\subparagraph#1%
 {\par(Subsubsubsubsection head:)#1\par }}{}%
\@ifundefined{therefore}{}{}%
\@ifundefined{backepsilon}{}{}%
\@ifundefined{yen}{}{}%
\@ifundefined{registered}{%
   \def\registered{\relax\ifmmode{}\r@gistered
                    \else$\m@th\r@gistered$\fi}%
 \def\r@gistered{^{\ooalign
  {\hfil\raise.07ex\hbox{$\scriptstyle\rm\text{R}$}\hfil\crcr
  \mathhexbox20D}}}}{}%
\@ifundefined{Eth}{}{}%
\@ifundefined{eth}{}{}%
\@ifundefined{Thorn}{}{}%
\@ifundefined{thorn}{}{}%
%
\@ifundefined{degree}{}{}%
%
\newdimen\theight
\def\Column{%
 \vadjust{\setbox\z@=\hbox{\scriptsize\quad\quad tcol}%
  \theight=\ht\z@\advance\theight by \dp\z@\advance\theight by \lineskip
  \kern -\theight \vbox to \theight{%
   \rightline{\rlap{\box\z@}}%
   \vss
   }%
  }%
 }%
\def\qed{%
 \ifhmode\unskip\nobreak\fi\ifmmode\ifinner\else\hskip5\p@\fi\fi
 \hbox{\hskip5\p@\vrule width4\p@ height6\p@ depth1.5\p@\hskip\p@}%
 }%
\def\miss{\hbox{\vrule height2\p@ width 2\p@ depth\z@}}%
%
%
\def\tcol#1{{\baselineskip=6\p@ \vcenter{#1}} \Column}  %
%
%
%
%
%

\def\newfmtname{LaTeX2e}
\def\chkcompat{%
   \if@compatibility
   \else
     \usepackage{latexsym}
   \fi
}

\ifx\fmtname\newfmtname
  \DeclareOldFontCommand{\rm}{\normalfont\rmfamily}{\mathrm}
  \DeclareOldFontCommand{\sf}{\normalfont\sffamily}{\mathsf}
  \DeclareOldFontCommand{\tt}{\normalfont\ttfamily}{\mathtt}
  \DeclareOldFontCommand{\bf}{\normalfont\bfseries}{\mathbf}
  \DeclareOldFontCommand{\it}{\normalfont\itshape}{\mathit}
  \DeclareOldFontCommand{\sl}{\normalfont\slshape}{\@nomath\sl}
  \DeclareOldFontCommand{\sc}{\normalfont\scshape}{\@nomath\sc}
  \chkcompat
\fi

%

\def\alpha{\Greekmath 010B }%
\def\beta{\Greekmath 010C }%
\def\gamma{\Greekmath 010D }%
\def\delta{\Greekmath 010E }%
\def\epsilon{\Greekmath 010F }%
\def\zeta{\Greekmath 0110 }%
\def\eta{\Greekmath 0111 }%
\def\theta{\Greekmath 0112 }%
\def\iota{\Greekmath 0113 }%
\def\kappa{\Greekmath 0114 }%
\def\lambda{\Greekmath 0115 }%
\def\mu{\Greekmath 0116 }%
\def\nu{\Greekmath 0117 }%
\def\xi{\Greekmath 0118 }%
\def\pi{\Greekmath 0119 }%
\def\rho{\Greekmath 011A }%
\def\sigma{\Greekmath 011B }%
\def\tau{\Greekmath 011C }%
\def\upsilon{\Greekmath 011D }%
\def\phi{\Greekmath 011E }%
\def\chi{\Greekmath 011F }%
\def\psi{\Greekmath 0120 }%
\def\omega{\Greekmath 0121 }%
\def\varepsilon{\Greekmath 0122 }%
\def\vartheta{\Greekmath 0123 }%
\def\varpi{\Greekmath 0124 }%
\def\varrho{\Greekmath 0125 }%
\def\varsigma{\Greekmath 0126 }%
\def\varphi{\Greekmath 0127 }%

\def\nabla{\Greekmath 0272 }
\def\FindBoldGroup{%
   {\setbox0=\hbox{$\mathbf{x\global\edef\theboldgroup{\the\mathgroup}}$}}%
}

\def\Greekmath#1#2#3#4{%
    \if@compatibility
        \ifnum\mathgroup=\symbold
           \mathchoice{\mbox{\boldmath$\displaystyle\mathchar"#1#2#3#4$}}%
                      {\mbox{\boldmath$\textstyle\mathchar"#1#2#3#4$}}%
                      {\mbox{\boldmath$\scriptstyle\mathchar"#1#2#3#4$}}%
                      {\mbox{\boldmath$\scriptscriptstyle\mathchar"#1#2#3#4$}}%
        \else
           \mathchar"#1#2#3#4%
        \fi 
    \else 
        \FindBoldGroup
        \ifnum\mathgroup=\theboldgroup 
           \mathchoice{\mbox{\boldmath$\displaystyle\mathchar"#1#2#3#4$}}%
                      {\mbox{\boldmath$\textstyle\mathchar"#1#2#3#4$}}%
                      {\mbox{\boldmath$\scriptstyle\mathchar"#1#2#3#4$}}%
                      {\mbox{\boldmath$\scriptscriptstyle\mathchar"#1#2#3#4$}}%
        \else
           \mathchar"#1#2#3#4%
        \fi     	    
	  \fi}

\newif\ifGreekBold  \GreekBoldfalse
\let\SAVEPBF=\pbf
\def\pbf{\GreekBoldtrue\SAVEPBF}%

\@ifundefined{theorem}{\newtheorem{theorem}{Theorem}}{}
\@ifundefined{lemma}{\newtheorem{lemma}[theorem]{Lemma}}{}
\@ifundefined{corollary}{\newtheorem{corollary}[theorem]{Corollary}}{}
\@ifundefined{conjecture}{}{}
\@ifundefined{proposition}{}{}
\@ifundefined{axiom}{}{}
\@ifundefined{remark}{}{}
\@ifundefined{example}{\newtheorem{example}{Example}}{}
\@ifundefined{exercise}{}{}
\@ifundefined{definition}{\newtheorem{definition}{Definition}}{}

\@ifundefined{mathletters}{%
  \newcounter{equationnumber}  
  \def\mathletters{%
     \addtocounter{equation}{1}
     \edef\@currentlabel{\theequation}%
     \setcounter{equationnumber}{\c@equation}
     \setcounter{equation}{0}%
     \edef\theequation{\@currentlabel\noexpand\alph{equation}}%
  }
  
}{}

\@ifundefined{BibTeX}{%
    \def\BibTeX{{\rm B\kern-.05em{\sc i\kern-.025em b}\kern-.08em
                 T\kern-.1667em\lower.7ex\hbox{E}\kern-.125emX}}}{}%
\@ifundefined{AmS}%
    {\def\AmS{{\protect\usefont{OMS}{cmsy}{m}{n}%
                A\kern-.1667em\lower.5ex\hbox{M}\kern-.125emS}}}{}%
\@ifundefined{AmSTeX}{}{}%
%

%
%
\ifx\ds@amstex\relax
   \message{amstex already loaded}\makeatother 
\else
   \@ifpackageloaded{amstex}%
      {\message{amstex already loaded}\makeatother }
      {}
   \@ifpackageloaded{amsgen}%
      {\message{amsgen already loaded}\makeatother }
      {}
\fi
%
%
%
%
\let\DOTSI\relax
\def\RIfM@{\relax\ifmmode}%
\def\FN@{\futurelet\next}%
\newcount\intno@
\def\iint{\DOTSI\intno@\tw@\FN@\ints@}%
\def\iiint{\DOTSI\intno@\thr@@\FN@\ints@}%
\def\iiiint{\DOTSI\intno@4 \FN@\ints@}%
\def\idotsint{\DOTSI\intno@\z@\FN@\ints@}%
\def\ints@{\findlimits@\ints@@}%
\newif\iflimtoken@
\newif\iflimits@
\def\findlimits@{\limtoken@true\ifx\next\limits\limits@true
 \else\ifx\next\nolimits\limits@false\else
 \limtoken@false\ifx\ilimits@\nolimits\limits@false\else
 \ifinner\limits@false\else\limits@true\fi\fi\fi\fi}%
\def\multint@{\int\ifnum\intno@=\z@\intdots@                          
 \else\intkern@\fi                                                    
 \ifnum\intno@>\tw@\int\intkern@\fi                                   
 \ifnum\intno@>\thr@@\int\intkern@\fi                                 
 \int}
\def\multintlimits@{\intop\ifnum\intno@=\z@\intdots@\else\intkern@\fi
 \ifnum\intno@>\tw@\intop\intkern@\fi
 \ifnum\intno@>\thr@@\intop\intkern@\fi\intop}%
\def\intic@{%
    \mathchoice{\hskip.5em}{\hskip.4em}{\hskip.4em}{\hskip.4em}}%
\def\negintic@{\mathchoice
 {\hskip-.5em}{\hskip-.4em}{\hskip-.4em}{\hskip-.4em}}%
\def\ints@@{\iflimtoken@                                              
 \def\ints@@@{\iflimits@\negintic@
   \mathop{\intic@\multintlimits@}\limits                             
  \else\multint@\nolimits\fi                                          
  \eat@}
 \else                                                                
 \def\ints@@@{\iflimits@\negintic@
  \mathop{\intic@\multintlimits@}\limits\else
  \multint@\nolimits\fi}\fi\ints@@@}%
\def\intkern@{\mathchoice{\!\!\!}{\!\!}{\!\!}{\!\!}}%
\def\plaincdots@{\mathinner{\cdotp\cdotp\cdotp}}%
\def\intdots@{\mathchoice{\plaincdots@}%
 {{\cdotp}\mkern1.5mu{\cdotp}\mkern1.5mu{\cdotp}}%
 {{\cdotp}\mkern1mu{\cdotp}\mkern1mu{\cdotp}}%
 {{\cdotp}\mkern1mu{\cdotp}\mkern1mu{\cdotp}}}%
%
%
%
\def\RIfM@{\relax\protect\ifmmode}
\def\text{\RIfM@\expandafter\text@\else\expandafter\mbox\fi}
\let\nfss@text\text
\def\text@#1{\mathchoice
   {\textdef@\displaystyle\f@size{#1}}%
   {\textdef@\textstyle\tf@size{\firstchoice@false #1}}%
   {\textdef@\textstyle\sf@size{\firstchoice@false #1}}%
   {\textdef@\textstyle \ssf@size{\firstchoice@false #1}}%
   \glb@settings}

\def\textdef@#1#2#3{\hbox{{%
                    \everymath{#1}%
                    \let\f@size#2\selectfont
                    #3}}}
\newif\iffirstchoice@
\firstchoice@true
%
%
%
%
%
\def\Let@{\relax\iffalse{\fi\let\\=\cr\iffalse}\fi}%
\def\vspace@{\def\vspace##1{\crcr\noalign{\vskip##1\relax}}}%
\def\multilimits@{\bgroup\vspace@\Let@
 \baselineskip\fontdimen10 \scriptfont\tw@
 \advance\baselineskip\fontdimen12 \scriptfont\tw@
 \lineskip\thr@@\fontdimen8 \scriptfont\thr@@
 \lineskiplimit\lineskip
 \vbox\bgroup\ialign\bgroup\hfil$\m@th\scriptstyle{##}$\hfil\crcr}%
\def\Sb{_\multilimits@}%
\def\endSb{\crcr\egroup\egroup\egroup}%
\def\Sp{^\multilimits@}%

%
%
%
\newdimen\ex@
\ex@.2326ex
\def\rightarrowfill@#1{$#1\m@th\mathord-\mkern-6mu\cleaders
 \hbox{$#1\mkern-2mu\mathord-\mkern-2mu$}\hfill
 \mkern-6mu\mathord\rightarrow$}%
\def\leftarrowfill@#1{$#1\m@th\mathord\leftarrow\mkern-6mu\cleaders
 \hbox{$#1\mkern-2mu\mathord-\mkern-2mu$}\hfill\mkern-6mu\mathord-$}%
\def\leftrightarrowfill@#1{$#1\m@th\mathord\leftarrow
\mkern-6mu\cleaders
 \hbox{$#1\mkern-2mu\mathord-\mkern-2mu$}\hfill
 \mkern-6mu\mathord\rightarrow$}%
\def\overrightarrow{\mathpalette\overrightarrow@}%
\def\overrightarrow@#1#2{\vbox{\ialign{##\crcr\rightarrowfill@#1\crcr
 \noalign{\kern-\ex@\nointerlineskip}$\m@th\hfil#1#2\hfil$\crcr}}}%

\def\overleftarrow{\mathpalette\overleftarrow@}%
\def\overleftarrow@#1#2{\vbox{\ialign{##\crcr\leftarrowfill@#1\crcr
 \noalign{\kern-\ex@\nointerlineskip}$\m@th\hfil#1#2\hfil$\crcr}}}%
\def\overleftrightarrow{\mathpalette\overleftrightarrow@}%
\def\overleftrightarrow@#1#2{\vbox{\ialign{##\crcr
   \leftrightarrowfill@#1\crcr
 \noalign{\kern-\ex@\nointerlineskip}$\m@th\hfil#1#2\hfil$\crcr}}}%
\def\underrightarrow{\mathpalette\underrightarrow@}%
\def\underrightarrow@#1#2{\vtop{\ialign{##\crcr$\m@th\hfil#1#2\hfil
  $\crcr\noalign{\nointerlineskip}\rightarrowfill@#1\crcr}}}%

\def\underleftarrow{\mathpalette\underleftarrow@}%
\def\underleftarrow@#1#2{\vtop{\ialign{##\crcr$\m@th\hfil#1#2\hfil
  $\crcr\noalign{\nointerlineskip}\leftarrowfill@#1\crcr}}}%
\def\underleftrightarrow{\mathpalette\underleftrightarrow@}%
\def\underleftrightarrow@#1#2{\vtop{\ialign{##\crcr$\m@th
  \hfil#1#2\hfil$\crcr
 \noalign{\nointerlineskip}\leftrightarrowfill@#1\crcr}}}%


\def\qopnamewl@#1{\mathop{\operator@font#1}\nlimits@}
\let\nlimits@\displaylimits
\def\setboxz@h{\setbox\z@\hbox}

\def\varlim@#1#2{\mathop{\vtop{\ialign{##\crcr
 \hfil$#1\m@th\operator@font lim$\hfil\crcr
 \noalign{\nointerlineskip}#2#1\crcr
 \noalign{\nointerlineskip\kern-\ex@}\crcr}}}}

 \def\rightarrowfill@#1{\m@th\setboxz@h{$#1-$}\ht\z@\z@
  $#1\copy\z@\mkern-6mu\cleaders
  \hbox{$#1\mkern-2mu\box\z@\mkern-2mu$}\hfill
  \mkern-6mu\mathord\rightarrow$}
\def\leftarrowfill@#1{\m@th\setboxz@h{$#1-$}\ht\z@\z@
  $#1\mathord\leftarrow\mkern-6mu\cleaders
  \hbox{$#1\mkern-2mu\copy\z@\mkern-2mu$}\hfill
  \mkern-6mu\box\z@$}

\def\projlim{\qopnamewl@{proj\,lim}}
\def\injlim{\qopnamewl@{inj\,lim}}
\def\varinjlim{\mathpalette\varlim@\rightarrowfill@}
\def\varprojlim{\mathpalette\varlim@\leftarrowfill@}
\def\varliminf{\mathpalette\varliminf@{}}
\def\varliminf@#1{\mathop{\underline{\vrule\@depth.2\ex@\@width\z@
   \hbox{$#1\m@th\operator@font lim$}}}}
\def\varlimsup{\mathpalette\varlimsup@{}}
\def\varlimsup@#1{\mathop{\overline
  {\hbox{$#1\m@th\operator@font lim$}}}}

%
%
%
\def\dfrac#1#2{{\displaystyle {#1 \over #2}}}%
%
%
%
%
%
%
%
%
%
%
%
%
%
%
%
%
%
%
%

%
%
%
%
%
%
%
%
%
%
%
%
%
%
%
%
%
%
%
%
%
%

%
%
%
%
%
%
%
%
%
%
%
%
%
%
%
%
%
%
%
%
%
%
%
%
\begingroup \catcode `|=0 \catcode `[= 1
\catcode`]=2 \catcode `\{=12 \catcode `\}=12
\catcode`\\=12 
|gdef|@alignverbatim#1\end{align}[#1|end[align]]
|gdef|@salignverbatim#1\end{align*}[#1|end[align*]]

|gdef|@alignatverbatim#1\end{alignat}[#1|end[alignat]]
|gdef|@salignatverbatim#1\end{alignat*}[#1|end[alignat*]]

|gdef|@xalignatverbatim#1\end{xalignat}[#1|end[xalignat]]
|gdef|@sxalignatverbatim#1\end{xalignat*}[#1|end[xalignat*]]

|gdef|@gatherverbatim#1\end{gather}[#1|end[gather]]
|gdef|@sgatherverbatim#1\end{gather*}[#1|end[gather*]]

|gdef|@gatherverbatim#1\end{gather}[#1|end[gather]]
|gdef|@sgatherverbatim#1\end{gather*}[#1|end[gather*]]

|gdef|@multilineverbatim#1\end{multiline}[#1|end[multiline]]
|gdef|@smultilineverbatim#1\end{multiline*}[#1|end[multiline*]]

|gdef|@arraxverbatim#1\end{arrax}[#1|end[arrax]]
|gdef|@sarraxverbatim#1\end{arrax*}[#1|end[arrax*]]

|gdef|@tabulaxverbatim#1\end{tabulax}[#1|end[tabulax]]
|gdef|@stabulaxverbatim#1\end{tabulax*}[#1|end[tabulax*]]

|endgroup

\def\align{\@verbatim \frenchspacing\@vobeyspaces \@alignverbatim
You are using the "align" environment in a style in which it is not defined.}

\@namedef{align*}{\@verbatim\@salignverbatim
You are using the "align*" environment in a style in which it is not defined.}
\expandafter\let\csname endalign*\endcsname =\endtrivlist

\def\alignat{\@verbatim \frenchspacing\@vobeyspaces \@alignatverbatim
You are using the "alignat" environment in a style in which it is not defined.}

\@namedef{alignat*}{\@verbatim\@salignatverbatim
You are using the "alignat*" environment in a style in which it is not defined.}
\expandafter\let\csname endalignat*\endcsname =\endtrivlist

\def\xalignat{\@verbatim \frenchspacing\@vobeyspaces \@xalignatverbatim
You are using the "xalignat" environment in a style in which it is not defined.}

\@namedef{xalignat*}{\@verbatim\@sxalignatverbatim
You are using the "xalignat*" environment in a style in which it is not defined.}
\expandafter\let\csname endxalignat*\endcsname =\endtrivlist

\def\gather{\@verbatim \frenchspacing\@vobeyspaces \@gatherverbatim
You are using the "gather" environment in a style in which it is not defined.}

\@namedef{gather*}{\@verbatim\@sgatherverbatim
You are using the "gather*" environment in a style in which it is not defined.}
\expandafter\let\csname endgather*\endcsname =\endtrivlist

\def\multiline{\@verbatim \frenchspacing\@vobeyspaces \@multilineverbatim
You are using the "multiline" environment in a style in which it is not defined.}

\@namedef{multiline*}{\@verbatim\@smultilineverbatim
You are using the "multiline*" environment in a style in which it is not defined.}
\expandafter\let\csname endmultiline*\endcsname =\endtrivlist

\def\arrax{\@verbatim \frenchspacing\@vobeyspaces \@arraxverbatim
You are using a type of "array" construct that is only allowed in AmS-LaTeX.}

\def\tabulax{\@verbatim \frenchspacing\@vobeyspaces \@tabulaxverbatim
You are using a type of "tabular" construct that is only allowed in AmS-LaTeX.}

\@namedef{arrax*}{\@verbatim\@sarraxverbatim
You are using a type of "array*" construct that is only allowed in AmS-LaTeX.}
\expandafter\let\csname endarrax*\endcsname =\endtrivlist

\@namedef{tabulax*}{\@verbatim\@stabulaxverbatim
You are using a type of "tabular*" construct that is only allowed in AmS-LaTeX.}
\expandafter\let\csname endtabulax*\endcsname =\endtrivlist


\def\@@eqncr{\let\@tempa\relax
    \ifcase\@eqcnt \def\@tempa{& & &}\or \def\@tempa{& &}%
      \else \def\@tempa{&}\fi
     \@tempa
     \if@eqnsw
        \iftag@
           \@taggnum
        \else
           \@eqnnum\stepcounter{equation}%
        \fi
     \fi
     \global\tag@false
     \global\@eqnswtrue
     \global\@eqcnt\z@\cr}

 \def\endequation{%
     \ifmmode\ifinner 
      \iftag@
        \addtocounter{equation}{-1} 
        $\hfil
           \displaywidth\linewidth\@taggnum\egroup \endtrivlist
        \global\tag@false
        \global\@ignoretrue   
      \else
        $\hfil
           \displaywidth\linewidth\@eqnnum\egroup \endtrivlist
        \global\tag@false
        \global\@ignoretrue 
      \fi
     \else   
      \iftag@
        \addtocounter{equation}{-1} 
        \eqno \hbox{\@taggnum}
        \global\tag@false%
        $$\global\@ignoretrue
      \else
        \eqno \hbox{\@eqnnum}
        $$\global\@ignoretrue
      \fi
     \fi\fi
 } 

 \newif\iftag@ \tag@false
 
 \def\tag{\@ifnextchar*{\@tagstar}{\@tag}}
 \def\@tag#1{%
     \global\tag@true
     \global\def\@taggnum{(#1)}}
 \def\@tagstar*#1{%
     \global\tag@true
     \global\def\@taggnum{#1}%
}


\makeatother

\makeatother

\begin{document}

\title{Skew cyclic codes over $\mathbb{F}_{p}+u\mathbb{F}_{p}$}
\maketitle
\begin{center}
\author{{\small Reza Dastbasteh{*}, Seyyed Hamed Mousavi{*}{*},
Taher Abualrub{*}{*}{*}, Nuh Aydin{*}{*}{*}{*},} 
{\small and Javad Haghighat{*}{*}}} 
\end{center}
\maketitle
\begin{center}
{\small {*}Department of Mathematics, Sabanci university, Istanbul,
Turkey}\\
{\small  {*}{*}Department of Electrical Engineering, Shiraz
University of Technology, Shiraz, Iran}\\
{\small  {*}{*}{*} Department of Mathematics and Statistics,
American University of Sharjah, Sharjah, UAE}\\
{\small  {*}{*}{*}{*} Department of Mathematics and Statistics,
Kenyon College, Gambier, Ohio, USA}\\
{\small  E-mails:}\\
{\small  dastbasteh@sabanciuniv.edu, h.moosavi@sutech.ac.ir,
abualrub@aus.edu, aydinn@kenyon.edu, haghighat@sutech.ac.ir}
\end{center}

%
%

\begin{abstract}
In this paper, we study skew cyclic codes with arbitrary length over the
ring $R=\mathbb{F}_{p}+u\mathbb{F}_{p}$ where $p$ is an odd prime and $%
u^{2}=0$. We characterize all skew cyclic codes of length $n$ as left $%
R[x;\theta ]$-submodules of $R_{n}=R[x;\theta ]/\langle x^{n}-1\rangle $. We
find all generator polynomials for these codes and describe their minimal
spanning sets. Moreover, an encoding and decoding algorithm is presented for
skew cyclic codes over the ring $R$. Finally, based on the theory we
developed in this paper, we provide examples of codes with good parameters
over $F_{p}$ with different odd prime $p.$ In fact,
example 25 in our paper is a new ternary code in the class of quasi-twisted
codes. The other examples we provided are examples of optimal codes. 
\end{abstract}

\section{Introduction}

Cyclic codes are an important class of codes from both theoretical and
practical points of view. Traditionally, cyclic codes were studied over
finite fields. Recently, finite rings and their ideals are employed to
construct cyclic codes with good error detection and error correction
capabilities \cite{calderbank95,pless,lopez,bonnecaze,wolfman2}. These codes
have found applications in various areas including wireless sensor networks,
steganography and burst errors \cite{identification,burst}. 



Boucher, \textit{et. al}, generalized the notion of cyclic codes in \cite%
{Boucher2007}. They used generator polynomials in a non-commutative
polynomial ring called skew polynomial ring. They gave examples of skew
cyclic codes with Hamming distances larger than previously best known linear
codes of the same length and dimension. In \cite{taher2010}, Abualrub, 
\textit{et. al}, generalized the concept of skew cyclic codes to skew
quasi-cyclic codes. They constructed several new codes with Hamming
distances exceeding the Hamming distances of the previously best known
linear codes with comparable parameters. Other papers have appeared that
make use of various non-commutative rings to construct linear codes with
good parameters \cite{Boucher2008,boucher2011,gao2013,bhaintwal}.

Let $p$ be an odd prime number. In this paper, we are interested in studying
skew cyclic codes over the ring $R=\mathbb{F}_{p}\mathbb{+}u\mathbb{F}_{p}$
where $u^{2}=0$. Note that if we let $p=2,$ then the ring $\mathbb{F}_{2}%
\mathbb{+}u\mathbb{F}_{2}$ has only the trivial automorphism, and therefore
skew cyclic codes over this ring are exactly the classical cyclic codes
studied in \cite{siap2007}. The motivation behind studying skew cyclic codes
over this specific ring is that
 compared to the class of cyclic codes over $R$, the class of skew
cyclic codes is larger. This suggests that there is a better possibility of
finding codes with good parameters from skew cyclic codes over $R$.



The paper is organized as follows: In section \ref{Preliminaries}, we
discuss some properties of the skew polynomial ring $R[x;\theta].$ 
In section \ref{sec:Generator-polynomials-of}, we find the set of generator
polynomials for skew cyclic codes over the ring $R$. Section \ref%
{sec:Minimal-spanning-sets} studies minimal generating sets for these codes
and their cardinality. Section \ref{sub:The-Encoding-and} includes an
encoding and decoding algorithm for these codes. Section \ref{sec:Examples}
includes examples of linear codes over $\mathbb{F}_p$ obtained from skew
cyclic codes over $R$ by the help of a Gray map. Section \ref{sec:Conclusion}
includes the conclusion of our work and suggestions for future work.

\section{Preliminaries}

\label{Preliminaries}

Let $p$ be an odd prime number. Consider the Galois field $\mathbb{F}_{p}$
of order $p$ and the ring $R=\mathbb{F}_{p}\mathbb{+}u\mathbb{F}_{p}=\left\{
a+ub|~a,b\in\mathbb{F}_{p},\text{ with }u^{2}=0\right\} =\mathbb{F}_{p}\left[%
u\right]/\left\langle u^{2}\right\rangle .$ Denote the set of units of $%
\mathbb{F}_{p}$ by $\mathbb{F}_{p}^{\ast}=\mathbb{F}_{p}-\left\{ 0\right\} .$
Let $\theta$ be an automorphism of the ring $R$ with order $%
o(\theta)=|\langle\theta\rangle|=e>1.$ Then, every element in the finite
field $\mathbb{F}_{p}$ is fixed under $\theta.$ Hence, $\theta\left(a%
\right)=a$ for any $a\in\mathbb{F}_{p}.$ The next Lemma characterizes the
elements of the group $Aut(R).$

\begin{lemma}
Let $\theta\in Aut(R)$ and $a+ub\in R.$ Then $\theta\left(a+ub\right)$ $%
=a+usb$ for some $s\in\mathbb{F}_{p}^{\ast}$.
\end{lemma}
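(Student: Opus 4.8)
The plan is to reduce the statement to understanding $\theta(u)$ alone. Since the paragraph preceding the lemma establishes that $\theta$ fixes every element of $\mathbb{F}_p$, and $\theta$ is a ring homomorphism, I would first write
\[
\theta(a+ub) = \theta(a) + \theta(u)\theta(b) = a + b\,\theta(u),
\]
using $\theta(a)=a$ and $\theta(b)=b$ for $a,b\in\mathbb{F}_p$. Thus the whole problem is to show $\theta(u) = us$ for some $s\in\mathbb{F}_p^{\ast}$.

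Next I would write $\theta(u) = c + ud$ with $c,d\in\mathbb{F}_p$ and exploit the defining relation $u^2 = 0$. Applying $\theta$ gives $0 = \theta(u^2) = \theta(u)^2 = (c+ud)^2 = c^2 + 2cdu$. Comparing coefficients in the basis $\{1,u\}$ of $R$ over $\mathbb{F}_p$ yields $c^2 = 0$ and $2cd = 0$. Since $\mathbb{F}_p$ is a field (and $p$ is odd, so $2$ is a unit), $c^2=0$ forces $c=0$, hence $\theta(u) = ud$.

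Finally I would use injectivity of $\theta$: if $d=0$ then $\theta(u) = 0 = \theta(0)$ while $u\neq 0$, contradicting that $\theta$ is an automorphism. Therefore $d\in\mathbb{F}_p^{\ast}$, and setting $s:=d$ gives $\theta(u)=us$ and $\theta(a+ub) = a + usb$, as claimed.

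There is no serious obstacle here; the argument is a direct computation. The only points requiring a little care are (i) invoking the already-established fact that $\mathbb{F}_p$ is fixed pointwise by $\theta$, (ii) using that $R$ is free of rank $2$ over $\mathbb{F}_p$ with basis $\{1,u\}$ so that coefficient comparison is legitimate, and (iii) remembering to use \emph{injectivity} (not just homomorphism property) of $\theta$ to conclude $s\neq 0$.
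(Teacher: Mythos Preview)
Your proof is correct and follows essentially the same route as the paper: write $\theta(u)=c+ud$, square it using $u^2=0$ to force $c=0$, and then finish by expanding $\theta(a+ub)$ via the fact that $\mathbb{F}_p$ is fixed. Your argument is in fact slightly more complete, since you explicitly invoke injectivity of $\theta$ to rule out $d=0$, a point the paper's proof asserts without justification.
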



\begin{proof}
Let $\theta \in Aut(R)$ and suppose that $\theta \left( u\right) =r+us$ for
some $r,s\in \mathbb{F}_{p}.$ Then $u^{2}=0$ and 
\begin{eqnarray*}
0 &=&\theta \left( u^{2}\right) =\theta (u)\theta (u) \\
0 &=&\left( r+us\right) \left( r+us\right)=r^2+2urs \\
0 &=&r^{2}.
\end{eqnarray*}%
Hence $r=0$ and $\theta \left( u\right) =us$ for some $s\in \mathbb{F}%
_{p}^{\ast }$ $.$ Now, let $a+ub\in R.$ Then 
\begin{eqnarray*}
\theta \left( a+ub\right) &=&\theta (a)+\theta (u)\theta (b) \\
&=&a+usb.
\end{eqnarray*}
\end{proof}

\noindent One can show by induction that if $\theta\left(a+ub\right)$ $=a+usb
$, then $\theta^{i}\left(a+ub\right)$ $=a+us^{i}b$ for any positive integer $%
i.$


\begin{definition}
Let $\theta$ be an automorphism on $R.$ Define the skew polynomial set $R%
\left[x;\theta\right]$ to be 
\begin{equation*}
R[x;\theta]=\left\{ 
\begin{array}{c}
f(x)=a_{0}+a_{1}x+a_{2}x^{2}+\cdots+a_{n}x^{n}|~\text{ } \\ 
a_{i}\in\text{ }R\text{ for all }i=0,1,\ldots,n%
\end{array}%
\right\}
\end{equation*}
\end{definition}

\noindent where the addition of these polynomials is defined in the usual
way while multiplication $\ast$ is defined using the distributive law and the rule 
\begin{equation}
(ax^{i})\ast(bx^{j})=a\theta^{i}(b)x^{i+j}.  \tag{1}  \label{eq:1}
\end{equation}

The set $R[x;\theta ]$ is a non-commutative ring called the skew polynomial
ring with the usual addition of polynomials and multiplication defined as in
Equation (\ref{eq:1}) . Note that if $a,b\in \mathbb{F}_{p}[x],$ then $%
(ax^{i})\ast(bx^{j})=a\theta ^{i}(b)x^{i+j}=abx^{i+j}$ because $\theta
\left( b\right) =b$ for all $b\in \mathbb{F}_{p}.$ Hence, the ring $\mathbb{F%
}_{p}[x]$ is a subring of $R[x;\theta ].$ 

\begin{theorem}
\cite{Mcdonald}\label{Division}(The Right Division Algorithm) Let $f$ and $g$
be two polynomials in $R[x;\theta ]$ 
\textcolor{black}{$\ $with
the leading coefficient of $f$ ~being unit}. Then there exist unique
polynomials $q$ and $r$ such that 
\begin{equation*}
g=q\ast f+r\text{ where }r=0\text{ or }\deg (r)<\deg (f).
\end{equation*}
\end{theorem}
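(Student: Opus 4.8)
The plan is to imitate the classical division argument for polynomials over a field, carrying out induction on $\deg(g)$ while keeping careful track of the twist coming from $\theta$ and of the fact that $R$ is not an integral domain. First I would dispose of the trivial case: if $g=0$ or $\deg(g)<\deg(f)$, take $q=0$ and $r=g$.

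For the inductive step (existence), set $m=\deg(g)\ge n=\deg(f)$, let $a$ be the leading coefficient of $g$, and let $b$ be the leading coefficient of $f$, which is a unit of $R$ by hypothesis. Since $\theta$ is an automorphism, $\theta^{m-n}(b)$ is again a unit, so $c:=a\,\bigl(\theta^{m-n}(b)\bigr)^{-1}=a\,\theta^{m-n}(b^{-1})$ is well defined. Using the multiplication rule~(\ref{eq:1}) one gets $(cx^{m-n})\ast(bx^{n})=c\,\theta^{m-n}(b)\,x^{m}=ax^{m}$, so the degree-$m$ terms of $g$ and of $(cx^{m-n})\ast f$ coincide and $g_{1}:=g-(cx^{m-n})\ast f$ is either $0$ or of degree $<m$. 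Applying the induction hypothesis to $g_{1}$ gives $g_{1}=q_{1}\ast f+r$ with $r=0$ or $\deg(r)<n$, whence $g=(cx^{m-n}+q_{1})\ast f+r$, as required.

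For uniqueness, suppose $g=q\ast f+r=q'\ast f+r'$ with $r,r'$ each zero or of degree less than $n$. Then $(q-q')\ast f=r'-r$. If $q\neq q'$, write $d=\deg(q-q')$ with leading coefficient $\alpha\neq0$; the top-degree term of $(q-q')\ast f$ is $\alpha\,\theta^{d}(b)\,x^{d+n}$, and since $\theta^{d}(b)$ is a unit, $\alpha\,\theta^{d}(b)=0$ would force $\alpha=0$. Hence $(q-q')\ast f$ has degree exactly $d+n\ge n$, contradicting $\deg(r'-r)<n$. Therefore $q=q'$ and then $r=r'$.

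The main point — the only place the argument genuinely departs from the field case — is that $R=\mathbb{F}_{p}+u\mathbb{F}_{p}$ has zero divisors, so multiplication need not add degrees in general; the unit hypothesis on the leading coefficient of $f$ is exactly what is needed both to define the cancelling monomial $c$ in the existence step and to guarantee $\alpha\,\theta^{d}(b)\neq0$ in the uniqueness step. I expect no serious obstacle beyond this bookkeeping; in particular one must be careful that the exponent of $\theta$ in the cancelling monomial is $m-n$, the degree of the monomial being formed, rather than $n$.
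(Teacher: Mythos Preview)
Your argument is correct and is the standard proof of the division algorithm in skew polynomial rings; the bookkeeping with $\theta^{m-n}$ and the use of the unit hypothesis in both the existence and uniqueness steps are handled properly. Note, however, that the paper does not supply its own proof of this theorem: it is stated with a citation to McDonald's textbook \cite{Mcdonald} and used as a black box, so there is no in-paper argument to compare against. Your write-up would serve perfectly well as the omitted proof.
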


The above result is called division on the right by $f.$ A similar result
can be proved regarding division on left by $f.$

\begin{theorem}
The center of $R[x;\theta]$ is the set $Z\left(R[x;\theta]\right)=\mathbb{F}%
_{p}[x^{e}]$ for any $\theta\in Aut(R)$ of order $e$.
\end{theorem}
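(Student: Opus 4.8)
The plan is to prove the two inclusions $\mathbb{F}_{p}[x^{e}]\subseteq Z(R[x;\theta])$ and $Z(R[x;\theta])\subseteq \mathbb{F}_{p}[x^{e}]$ separately. Throughout I will use the description of $\theta$ from the Lemma: writing $\theta(u)=us$ with $s\in\mathbb{F}_{p}^{\ast}$, one has $\theta^{i}(a+ub)=a+us^{i}b$, so $\theta^{i}=\mathrm{id}$ if and only if $s^{i}=1$; since $o(\theta)=e$, this says that $e$ is exactly the multiplicative order of $s$ in $\mathbb{F}_{p}^{\ast}$, and in particular $s^{i}=1 \iff e\mid i$. I will also record that the subring of $R$ fixed by $\theta$ is precisely $\mathbb{F}_{p}$: if $a+usb=a+ub$ then $(s-1)b=0$ in $\mathbb{F}_{p}$, and $s\neq 1$ (as $e>1$) forces $b=0$.

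For the first inclusion it suffices, by distributivity of $\ast$, to check that a monomial $cx^{ek}$ with $c\in\mathbb{F}_{p}$ commutes with an arbitrary monomial $ax^{j}$, $a\in R$. Using the multiplication rule~(\ref{eq:1}) together with $\theta^{ek}=\mathrm{id}$, $\theta^{j}(c)=c$, and the commutativity of the ring $R$, both products $(cx^{ek})\ast(ax^{j})$ and $(ax^{j})\ast(cx^{ek})$ equal $ca\,x^{ek+j}$. Hence every polynomial in $\mathbb{F}_{p}[x^{e}]$ lies in the center.

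For the reverse inclusion, take $f=\sum_{i=0}^{n}a_{i}x^{i}\in Z(R[x;\theta])$. It is enough to know that $f$ commutes with $x$ and with $u$, since these two elements generate $R[x;\theta]$ as a ring and centralizers are subrings. Imposing $x\ast f=f\ast x$ and comparing coefficients via~(\ref{eq:1}) gives $\theta(a_{i})=a_{i}$ for every $i$, so by the remark above all $a_{i}\in\mathbb{F}_{p}$, i.e. $f\in\mathbb{F}_{p}[x]$. Imposing $u\ast f=f\ast u$ and computing $u\ast f=\sum_{i}ua_{i}x^{i}$ and $f\ast u=\sum_{i}a_{i}\theta^{i}(u)x^{i}=\sum_{i}a_{i}s^{i}u\,x^{i}$, comparison of coefficients yields $a_{i}(1-s^{i})u=0$, and since $u\neq 0$ and $a_{i},s^{i}\in\mathbb{F}_{p}$ this forces $a_{i}(1-s^{i})=0$ in $\mathbb{F}_{p}$. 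Thus $a_{i}=0$ whenever $s^{i}\neq 1$, that is, whenever $e\nmid i$; so $f=\sum_{e\mid i}a_{i}x^{i}\in\mathbb{F}_{p}[x^{e}]$, which completes the argument.

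There is no serious obstacle here; the two points needing care are pinning down that the fixed subring of $\theta$ is exactly $\mathbb{F}_{p}$ and that $s$ has order precisely $e$, and justifying that commuting with the single constant $u$ and with $x$ already forces centrality, so that one does not have to test against all of $R[x;\theta]$.
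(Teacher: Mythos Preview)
Your proof is correct. The paper itself does not give a proof but simply refers to Lemma~1.1 in \cite{gao2013}; your argument is the standard direct computation of the center of a skew polynomial ring, specialized to $R=\mathbb{F}_{p}+u\mathbb{F}_{p}$, and is exactly the kind of argument that reference carries out. One small remark: in the reverse inclusion you do not actually need the observation that $x$ and $u$ generate $R[x;\theta]$ (though it is true, since $\mathbb{F}_{p}$ is the prime subring), because you are starting from $f\in Z(R[x;\theta])$ and merely exploiting two particular instances of commutation as necessary conditions; the sentence about centralizers is harmless but not essential to the logic.
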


\begin{proof}
The proof is similar to Lemma 1.1 in \cite{gao2013}. 
\end{proof}

As a result of this Theorem, the following corollary is clear.

\begin{corollary}
$x^{n}-1\in Z\left(R[x;\theta]\right)$ if and only if $e|n$.
\end{corollary}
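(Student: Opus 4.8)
The plan is to deduce everything directly from the preceding theorem, which identifies the center: $Z\!\left(R[x;\theta]\right)=\mathbb{F}_{p}[x^{e}]$. So the statement $x^{n}-1\in Z\!\left(R[x;\theta]\right)$ is equivalent to $x^{n}-1\in\mathbb{F}_{p}[x^{e}]$, and the corollary reduces to a purely combinatorial fact about which monomials lie in $\mathbb{F}_{p}[x^{e}]$: a polynomial belongs to $\mathbb{F}_{p}[x^{e}]$ exactly when every exponent appearing with nonzero coefficient is divisible by $e$ (and all coefficients lie in $\mathbb{F}_{p}$).

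For the ``if'' direction, I would suppose $e\mid n$ and write $n=em$ for some nonnegative integer $m$. Then $x^{n}-1=(x^{e})^{m}-1$, which is visibly a polynomial in $x^{e}$ with coefficients in $\mathbb{F}_{p}$ (indeed with coefficients $1$ and $-1$), hence lies in $\mathbb{F}_{p}[x^{e}]=Z\!\left(R[x;\theta]\right)$.

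For the ``only if'' direction, I would assume $x^{n}-1\in Z\!\left(R[x;\theta]\right)=\mathbb{F}_{p}[x^{e}]$. The two monomials occurring in $x^{n}-1$ are $x^{n}$ and the constant $-1=-x^{0}$; since $0$ is trivially a multiple of $e$, the only constraint is that the exponent $n$ of the leading term be divisible by $e$, giving $e\mid n$. (Here one uses $n\ge 1$, so that $x^{n}$ and $x^{0}$ are genuinely distinct terms and cannot cancel.)

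Since both implications follow immediately once the center is known, there is essentially no obstacle here; the only thing to be slightly careful about is stating precisely why membership in $\mathbb{F}_{p}[x^{e}]$ forces the exponent of each term to be a multiple of $e$, i.e.\ that $\{1,x^{e},x^{2e},\ldots\}$ is an $\mathbb{F}_{p}$-basis of $\mathbb{F}_{p}[x^{e}]$ inside $R[x;\theta]$, which is clear by comparing coefficients.
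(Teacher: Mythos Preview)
Your proposal is correct and matches the paper's approach: the paper simply remarks that the corollary is clear from the preceding theorem identifying $Z(R[x;\theta])=\mathbb{F}_{p}[x^{e}]$, and your argument spells out exactly why membership of $x^{n}-1$ in $\mathbb{F}_{p}[x^{e}]$ is equivalent to $e\mid n$.
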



The above corollary shows that the polynomial $\left(x^{n}-1\right)$ is in
the center $Z(R[x;\theta])$ of the ring $R[x;\theta]$, hence generates a
two-sided ideal if and only if the $e|n$. Consequently, the quotient space $%
R_{n}=R[x;\theta]/\langle x^{n}-1\rangle$ is a ring if and only if $e|n$. In
this case skew cyclic codes can be regarded as (left) ideals in $R_{n}$. In
this paper we are interested in skew cyclic codes for any length $n$
regardless of whether $e|n$ or not. We show that regarding them as modules
rather than ideals gives us the flexibility to handle skew cyclic codes of
all lengths in the same way.

Let $r(x)\in R[x;\theta \dot{]}$ and $\left( f(x)+\langle x^{n}-1\rangle
\right) \in R_{n}.$ Define%
\begin{equation*}
r(x)\left( f(x)+\langle x^{n}-1\rangle \right) =r(x)f(x)+\langle
x^{n}-1\rangle
\end{equation*}%
This multiplication with the usual addition leads to the following Lemma

\begin{lemma}
The quotient space $R_{n}$ is a left $R[x;\theta ]$ module.
\end{lemma}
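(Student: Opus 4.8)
The plan is to verify directly that the multiplication
$r(x)\cdot\bigl(f(x)+\langle x^{n}-1\rangle\bigr)=r(x)f(x)+\langle x^{n}-1\rangle$
introduced above, together with the usual coset addition on $R_{n}$, satisfies the module axioms over the ring $R[x;\theta]$. Since $R_{n}=R[x;\theta]/\langle x^{n}-1\rangle$ is already an abelian group under addition (it is the quotient of the additive group of $R[x;\theta]$ by the additive subgroup generated by $x^{n}-1$ and its $R[x;\theta]$-multiples, i.e.\ by the left ideal $\langle x^{n}-1\rangle$), the only real content is (i) that the action is well defined, and (ii) that it is compatible with the ring structure of $R[x;\theta]$ and the group structure of $R_{n}$.

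First I would check well-definedness: if $f(x)+\langle x^{n}-1\rangle=g(x)+\langle x^{n}-1\rangle$, then $f(x)-g(x)\in\langle x^{n}-1\rangle$, say $f(x)-g(x)=h(x)\ast(x^{n}-1)$ for some $h(x)\in R[x;\theta]$; hence $r(x)f(x)-r(x)g(x)=\bigl(r(x)\ast h(x)\bigr)\ast(x^{n}-1)\in\langle x^{n}-1\rangle$, so $r(x)f(x)$ and $r(x)g(x)$ lie in the same coset. Here I use that $\langle x^{n}-1\rangle$ is a \emph{left} ideal of $R[x;\theta]$, which is exactly why we need no divisibility hypothesis on $n$; the two-sidedness discussed in the Corollary is irrelevant for the module (as opposed to ring) structure. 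Next I would verify the four module axioms: for $r(x),s(x)\in R[x;\theta]$ and cosets $\bar f,\bar g\in R_{n}$, the identities $r(x)(\bar f+\bar g)=r(x)\bar f+r(x)\bar g$, $\bigl(r(x)+s(x)\bigr)\bar f=r(x)\bar f+s(x)\bar f$, $\bigl(r(x)\ast s(x)\bigr)\bar f=r(x)\bigl(s(x)\bar f\bigr)$, and $1\cdot\bar f=\bar f$ all follow immediately by lifting to representatives in $R[x;\theta]$ and using the left-distributive law, the associativity of $\ast$, and the fact that $1$ is the multiplicative identity of $R[x;\theta]$, then projecting back to $R_{n}$.

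There is essentially no obstacle here: the statement is a standard "quotient of a ring by a left ideal is a left module" fact, and the only point that deserves explicit mention is that $\langle x^{n}-1\rangle$ must be read as a left ideal so that well-definedness survives the non-commutativity of $\ast$. Accordingly I would keep the proof to a couple of lines: note that $\langle x^{n}-1\rangle$ is a left ideal of $R[x;\theta]$, invoke the general principle that $R[x;\theta]/\langle x^{n}-1\rangle$ is then a left $R[x;\theta]$-module with the action inherited from multiplication in $R[x;\theta]$, and observe that this inherited action is precisely the one defined above.
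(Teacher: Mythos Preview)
Your proposal is correct and follows the standard ``quotient by a left ideal'' verification; the paper itself gives no proof at all, simply asserting that the lemma follows from the definition of the action, so your write-up is a faithful (if more explicit) expansion of what the paper intends.
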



\begin{definition}
Let $R$ be the ring $\mathbb{F}_{p}\mathbb{+}u\mathbb{F}_{p}$ and $\theta$
be an automorphism of \ $R$ with $\left\vert \left\langle
\theta\right\rangle \right\vert =e.$ A subset $C$ of $R^{n}$ is called a
skew cyclic code of length $n$ if $C$ satisfies the following conditions:
\end{definition}

\begin{enumerate}
\item $C$ is a submodule of $R^{n}.$

\item If $c=\left(c_{0},c_{1},\ldots,c_{n-1}\right)\in C,$ then so is its
the skew cyclic shift, i.e., $\left(\theta\left(c_{n-1}\right),\theta%
\left(c_{0}\right),\ldots,\theta\left(c_{n-2}\right)\right)\in C.$
\end{enumerate}

We have the usual representation of vectors $(c_{0},c_{1},\ldots,c_{n-1})\in
R^{n}$ by polynomials $c(x)=c_{0}+c_{1}x+\cdots+c_{n-1}x^{n-1}$. With this
identification, the skew cyclic shift of a codeword $%
c(x)=c_{0}+c_{1}x+c_{2}x^{2}+\cdots+c_{n-1}x^{n-1}\in C$ corresponds to $%
x\ast c(x)\func{mod}\left(x^{n}-1\right)$ which is equal to $%
\theta(c_{n-1})+\theta(c_{0})x+\cdots+\theta\left(c_{n-2}\right)x^{n-1}$.

As is common in the discussion of cyclic codes, we can regard codewords of a
skew cyclic code $C$ as vectors or as polynomials interchangeably. In either
case, we use the same notation $C$ ~to denote the set of all codewords. We
follow this convention in the definition below and in the rest of the paper.

\begin{definition}
\label{Polynomial}(Polynomial definition of skew cyclic codes)

A subset $C\subseteq R_{n}$ is called a skew cyclic code if $C$ satisfies
the following conditions:

\begin{enumerate}
\item $C$ is an $R$-submodule of $R_{n}$

\item If $c(x)=\left(a_{0}+a_{1}x+\ldots+a_{n-1}x^{n-1}\right)\in C$, then

$x\ast c(x)=$ $\left(\theta(a_{n-1})+\theta(a_{0})x+\cdots+%
\theta(a_{n-2})x^{r-1}\right)\in C.$
\end{enumerate}
\end{definition}


As a result of this definition, we get the following Lemma.

\begin{lemma}
\label{submodule}$C$ is a skew cyclic code of length $n$ over $R$ if and
only if $C$ is a left $R[x;\theta ]-$submodule of $R_{n}=R[x;\theta
]/\langle x^{n}-1\rangle $.
\end{lemma}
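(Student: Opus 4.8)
The plan is to prove both implications directly from Definition \ref{Polynomial}, since the statement is essentially an unpacking of what ``$R$-submodule closed under $x\ast(\cdot)$'' means in terms of the $R[x;\theta]$-module structure on $R_n$ established just before Definition \ref{Polynomial}. First I would prove the forward direction. Suppose $C$ is a skew cyclic code in the sense of Definition \ref{Polynomial}. By condition (1) it is already an $R$-submodule of $R_n$, so to show it is a left $R[x;\theta]$-submodule it suffices to show $r(x)\ast c(x)\in C$ for every $r(x)\in R[x;\theta]$ and every $c(x)\in C$. Writing $r(x)=\sum_{i=0}^{m}r_i x^i$ with $r_i\in R$, we have $r(x)\ast c(x)=\sum_{i=0}^{m} r_i\bigl(x^i\ast c(x)\bigr)$ using the distributive law in $R[x;\theta]$ and the definition of the module action on $R_n$. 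Each term $x^i\ast c(x)$ is obtained by applying the operation ``$x\ast(\cdot)$'' of condition (2) exactly $i$ times, hence lies in $C$ by induction on $i$; and then $r_i\bigl(x^i\ast c(x)\bigr)\in C$ because $C$ is an $R$-submodule. Summing finitely many elements of the submodule $C$ again gives an element of $C$, so $r(x)\ast c(x)\in C$, which is what we needed.

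For the converse, suppose $C$ is a left $R[x;\theta]$-submodule of $R_n$. Then in particular $C$ is closed under addition and under the action of the constant polynomials $r\in R\subseteq R[x;\theta]$, so $C$ is an $R$-submodule of $R_n$; this is condition (1). For condition (2), take any $c(x)=a_0+a_1 x+\cdots+a_{n-1}x^{n-1}\in C$. Since $x\in R[x;\theta]$ and $C$ is a left $R[x;\theta]$-submodule, $x\ast c(x)\in C$; and by the computation recalled after Definition \ref{Polynomial} (the skew cyclic shift formula, coming from $x^n\equiv 1$ and the multiplication rule $(ax^i)\ast(bx^j)=a\theta^i(b)x^{i+j}$), we have $x\ast c(x)=\theta(a_{n-1})+\theta(a_0)x+\cdots+\theta(a_{n-2})x^{n-1}$, which is therefore in $C$. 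This establishes condition (2), so $C$ is a skew cyclic code in the sense of Definition \ref{Polynomial}.

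I do not expect any genuine obstacle here; the lemma is a bookkeeping statement and the only point requiring a small argument is the reduction in the forward direction from ``closed under $x\ast(\cdot)$ and under $R$'' to ``closed under the full module action,'' which is handled by the induction on the degree of $r(x)$ together with the distributive law. The one thing to be careful about is that the module action is well defined on the quotient $R_n$ and compatible with the multiplication in $R[x;\theta]$ — but this is exactly the content of the Lemma stating that $R_n$ is a left $R[x;\theta]$-module, which we may assume, so no further work is needed.
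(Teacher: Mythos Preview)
Your proof is correct and is exactly the natural unpacking of Definition \ref{Polynomial}; the paper itself does not spell out a proof at all, simply stating the lemma as an immediate consequence of the definition. So your argument is the expected one, just made explicit.
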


\section{\label{sec:Generator-polynomials-of}Generator polynomials of Skew
Cyclic Codes over $R$}

In this section we are interested in studying algebraic structures of skew
cyclic codes over $R.$ Using Lemma \ref{submodule}, our goal is to find the
generator polynomials of these codes as left $R[x;\theta]$-submodules of $%
R_{n}=R[x;\theta]/\langle x^{n}-1\rangle$.


\begin{lemma}
\label{partaker} For any $g(x)\in R[x;\theta ]$, there exists a unique $%
g^{\prime}(x)\in\mathbb{F}_{p}[x]$ of the same degree as $g(x)$ such that $%
g(x)\ast u=ug(x)^{\prime}$.
\end{lemma}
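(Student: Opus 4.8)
The plan is to reduce the statement to a single computation on monomials and then extend by linearity. Write $g(x)=\sum_{i=0}^{m} a_i x^i$ with $a_i = \alpha_i + u\beta_i \in R$, where $\alpha_i,\beta_i \in \mathbb{F}_p$. Since multiplication in $R[x;\theta]$ is defined by the distributive law, it suffices to understand $(a_i x^i)\ast u$ for each $i$ and then sum. Using the defining rule \eqref{eq:1} with the constant $u$ in the second slot, I compute
\begin{equation*}
(a_i x^i)\ast u = a_i\,\theta^{i}(u)\,x^{i} = (\alpha_i + u\beta_i)(us^i) x^i = \alpha_i s^i u x^i,
\end{equation*}
where $s\in\mathbb{F}_p^\ast$ is the parameter from Lemma 1 (and its iterate remark: $\theta^i(u)=us^i$), and the term $u\beta_i \cdot us^i$ vanishes because $u^2=0$. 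So the effect of right-multiplying $a_i x^i$ by $u$ is to kill the $u$-part of $a_i$, multiply the surviving $\mathbb{F}_p$-part by $s^i$, and then tack on a factor of $u$ on the left.

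Now I define $g^{\prime}(x) := \sum_{i=0}^{m} \alpha_i s^i x^i \in \mathbb{F}_p[x]$. Summing the monomial computation above over $i$ gives
\begin{equation*}
g(x)\ast u = \sum_{i=0}^m \alpha_i s^i u x^i = u\sum_{i=0}^m \alpha_i s^i x^i = u\, g^{\prime}(x),
\end{equation*}
where in the middle step I use that $u$ commutes past coefficients in $\mathbb{F}_p$ and past powers of $x$ in $R[x;\theta]$ — indeed, since $\theta$ fixes $\mathbb{F}_p$ pointwise, $\mathbb{F}_p[x]$ is a central-in-the-relevant-sense subring (it embeds as computed in the Preliminaries: $(ux^i)\ast(cx^j) = u\theta^i(c)x^{i+j} = ucx^{i+j}$ for $c\in\mathbb{F}_p$), so writing $u g^{\prime}(x)$ is unambiguous and equals $g^{\prime}(x)\ast u$ as well. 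This establishes existence.

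For the degree claim: the leading coefficient of $g(x)$ is $a_m = \alpha_m + u\beta_m \neq 0$. If $\alpha_m \neq 0$, then $\alpha_m s^m \neq 0$ since $s\in\mathbb{F}_p^\ast$, so $\deg g^{\prime}(x) = m = \deg g(x)$. The subtle case is $\alpha_m = 0$, i.e. $a_m = u\beta_m$ is a (nonzero) zero-divisor; then $\alpha_m s^m = 0$ and naively $g^{\prime}$ would have smaller degree. I expect this to be the main point to watch, and I anticipate the resolution is that the intended reading (consistent with how the lemma is used later, e.g. in the generator-polynomial results where leading coefficients are assumed to be units, cf. Theorem \ref{Division}) restricts $g(x)$ to polynomials whose leading coefficient is a unit of $R$ — equivalently $\alpha_m \neq 0$ — or else "same degree" is meant after the identification; I would state this hypothesis explicitly. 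Finally, uniqueness of $g^{\prime}(x)$ is immediate: $\mathbb{F}_p[x]$ embeds in $R[x;\theta]$, $u$ is not a zero-divisor when multiplying elements of $u\mathbb{F}_p[x]$ against... more carefully, if $u g^{\prime}_1(x) = u g^{\prime}_2(x)$ with $g^{\prime}_1,g^{\prime}_2\in\mathbb{F}_p[x]$, then $u(g^{\prime}_1(x)-g^{\prime}_2(x))=0$ in $R[x;\theta]$ forces $g^{\prime}_1 = g^{\prime}_2$ by comparing coefficients, since $u\gamma = 0$ in $R$ with $\gamma\in\mathbb{F}_p$ implies $\gamma = 0$.
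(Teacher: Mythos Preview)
Your argument is correct and follows essentially the same route as the paper: split each coefficient as $\alpha_i + u\beta_i$, compute $(a_ix^i)\ast u = \alpha_i s^i u x^i$ monomial-by-monomial using $\theta^i(u)=us^i$ and $u^2=0$, then sum and factor out $u$. You are in fact more careful than the paper on two points: you spell out uniqueness, and you flag the degree edge case (leading coefficient in $u\mathbb{F}_p$), which the paper's proof silently ignores; your reading that the intended hypothesis is a unit leading coefficient is the right fix.
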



\begin{proof}
Let $g(x)=\sum_{i=0}^{n}g_{i}x^{i}\in R[x,\theta]$. Since $g_{i}\in R$ for
each $i$, there exist $g^{\prime }_{i},g^{\prime \prime }_{i}\in\,\mathbb{F}%
_{p}$ such that $g_{i}=g^{\prime }_{i}+ug^{\prime \prime }_{i}$. So $%
g(x)=\sum(g^{\prime }_{i}+ug^{\prime \prime }_{i})x^{i}$.

Now since $\alpha^{i}u=\theta^{i}(u)x=x^{i}u$, we can see

\begin{align*}
g(x)\ast u= & \left(\sum(g^{\prime }_{i}+ug^{\prime \prime }_{i})x^{i}\right)\ast u=\left(\sum
g^{\prime }_{i}x^{i}\right)\ast u+\left(\sum ug^{\prime \prime }_{i}x^{i}\right)\ast u \\
= & \sum g^{\prime }_{i}\theta^{i}(u)x^{i}+\sum ug^{\prime \prime
}_{i}\theta^{i}(u)x^{i}=\sum g^{\prime }_{i}\alpha^{i}ux^{i}+\sum ug^{\prime
\prime }_{i}\alpha^{i}ux^{i}.
\end{align*}

Obviously, the second sum is 0 (it contains a factor of $u^{2}$). Therefore, we have 
\begin{align}
g(x)\ast u=\sum g^{\prime }_{i}\alpha^{i}ux^{i}=u\sum g^{\prime
}_{i}\alpha^{i}x^{i}=ug^{\prime }(x).
\end{align}
The uniqueness of $g^{\prime }$ in $\mathbb{F}_{p}[x]$ is clear by this
proof.
\end{proof}

\textbf{Notation}. For a fixed element $g\in\mathbb{F}_{p}[x]$, the element $%
g^{\prime}\in\mathbb{F}_{p}[x]$ in Lemma \ref{partaker} is unique and hence
we call it the \textit{partaker} of $g$. Also note that if $%
g=g_{1}+ug_{2}\in R[x;\theta],$ then $ug=u\left(g_{1}+ug_{2}%
\right)=ug_{1}=g^{\prime}u.$

\begin{example}
Suppose $g(x)=1+x+x^{2}\in (\mathbb{F}_p+u\mathbb{F}_p)[x;\theta]$ where $\theta(a+ub)=a+\alpha ub$ for $\alpha\in \mathbb{F}_p$. Then $g(x)\ast
u=(1+x+x^{2})\ast u=u+x\ast u+x^{2}\ast
u=u+\theta(u)x+\theta^{2}(u)x^{2}=u+\alpha ux+\alpha^{2}ux^{2}$.

Therefore, $g(x)u=u(1+\alpha x+\alpha^{2}x^{2})$. So $g^{\prime}=%
\alpha^{2}x^{2}+\alpha x+1$.

Note that there are infinitely many elements $h\in R$ such that $gu=uh$. It is sufficient to define $%
h=g^{\prime}+ul$ ~for every $l\in\mathbb{F}_{p}[x]$.
\end{example}

\begin{lemma}
\label{factorization}The polynomial $x^{n}-1$ factors in the ring $\mathbb{F}%
_{p}[x]$ as $x^{n}-1$ $=f_{1}(x)g_{1}(x)$ if and only if $x^{n}-1=f(x)\ast
g(x)$ in the ring $R\left[x;\theta\right]$ where $f(x)=f_{1}(x)+uf_{2}(x)$
and $g(x)=g_{1}(x)+ug_{2}(x)$ for some polynomials $f_{2}(x),~g_{2}(x)$ in $%
\mathbb{F}_{p}[x]$.
\end{lemma}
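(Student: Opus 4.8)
The plan is to prove the two directions separately. The implication from a factorization over $\mathbb{F}_{p}[x]$ to one over $R[x;\theta]$ is immediate, while the converse is the one with content and is handled by ``reducing modulo $u$''.

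\emph{From $\mathbb{F}_{p}[x]$ to $R[x;\theta]$.} Suppose $x^{n}-1 = f_{1}(x)g_{1}(x)$ in $\mathbb{F}_{p}[x]$. Since $\theta$ fixes every element of $\mathbb{F}_{p}$, the rule (\ref{eq:1}) shows that the restriction of $\ast$ to $\mathbb{F}_{p}[x]$ is just ordinary polynomial multiplication, so $f_{1}(x)\ast g_{1}(x) = f_{1}(x)g_{1}(x) = x^{n}-1$ in $R[x;\theta]$. Hence one may simply take $f_{2}(x) = g_{2}(x) = 0$, i.e.\ $f(x)=f_{1}(x)$ and $g(x)=g_{1}(x)$, and this direction is done.

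\emph{From $R[x;\theta]$ to $\mathbb{F}_{p}[x]$.} Assume $x^{n}-1 = f(x)\ast g(x)$ with $f = f_{1}+uf_{2}$, $g = g_{1}+ug_{2}$ and $f_{i},g_{i}\in\mathbb{F}_{p}[x]$. Expand by the distributive law:
\[
f\ast g \;=\; f_{1}\ast g_{1} \;+\; f_{1}\ast(ug_{2}) \;+\; (uf_{2})\ast g_{1} \;+\; (uf_{2})\ast(ug_{2}).
\]
The first summand equals $f_{1}g_{1}$ by the remark above. The last summand vanishes, since moving one factor of $u$ across $g_{2}$ by Lemma \ref{partaker} produces a factor $u^{2}=0$. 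For the two middle summands, Lemma \ref{partaker} lets us push the $u$ to the left of $f_{1}$ (via its partaker) and of $f_{2}$, so $f_{1}\ast(ug_{2})$ and $(uf_{2})\ast g_{1}$ both lie in $u\,\mathbb{F}_{p}[x]$; thus $f\ast g = f_{1}g_{1}+u\,h(x)$ for some $h(x)\in\mathbb{F}_{p}[x]$. Since $R[x;\theta] = \mathbb{F}_{p}[x]\oplus u\,\mathbb{F}_{p}[x]$ as abelian groups, comparing the two components in $f_{1}g_{1}+u\,h = x^{n}-1$ forces $f_{1}(x)g_{1}(x)=x^{n}-1$ (and, incidentally, $h(x)=0$, i.e.\ $f_{2},g_{2}$ are constrained by a linear relation over $\mathbb{F}_{p}[x]$).

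\emph{Remark on the argument.} A slicker way to package the converse is to note that the coefficientwise map $\pi\colon R[x;\theta]\to\mathbb{F}_{p}[x]$, $\sum(a_{i}+ub_{i})x^{i}\mapsto\sum a_{i}x^{i}$, is a ring homomorphism --- the only thing to check is compatibility with $\ast$, which holds because $\theta$ becomes trivial after applying $\pi$ --- so applying $\pi$ to $x^{n}-1=f\ast g$ immediately yields $x^{n}-1=f_{1}g_{1}$. Either way the computation is routine; the single place that needs a little care is how the element $u$ moves through products in $R[x;\theta]$, which is exactly what Lemma \ref{partaker} records, so I do not expect any genuine obstacle.
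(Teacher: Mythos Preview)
Your proof is correct and follows essentially the same approach as the paper: both directions are handled identically, with the forward direction by taking $f_{2}=g_{2}=0$ and the converse by expanding $f\ast g$, using Lemma~\ref{partaker} to push $u$ to the left, and then comparing the $\mathbb{F}_{p}[x]$- and $u\mathbb{F}_{p}[x]$-components. The only cosmetic difference is that the paper phrases the last step as a contradiction (writing $x^{n}-1=f_{1}g_{1}+r_{1}$ and forcing $r_{1}=0$), whereas you invoke the direct-sum decomposition directly; your added remark on the reduction homomorphism $\pi$ is a clean alternative not mentioned in the paper.
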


\begin{proof}
$\Rightarrow)$ Suppose $x^{n}-1$ $=f_{1}(x)g_{1}(x)$ in the ring $\mathbb{F}%
_{p}[x]$. Since $\mathbb{F}_{p}\left[x\right]$ is a subring of $R[x;\theta]$, $x^{n}-1$ $=f_{1}(x)g_{1}(x)$ in $R[x;\theta].$ Hence, we let $f_{2}(x)=g_{2}(x)=0.$

$\Leftarrow)$ Suppose $x^{n}-1=f(x)\ast g(x)$ in the ring $R\left[x;\theta\right]$
where $f(x)=f_{1}(x)+uf_{2}(x)$ and $g(x)=g_{1}(x)+ug_{2}(x)$ for some
polynomials $f_{2}(x),~g_{2}(x)$ in $\mathbb{F}_{p}[x].$ Then 
\begin{eqnarray*}
x^{n}-1 & = & f(x)\ast g(x) \\
& = & \left(f_{1}(x)+uf_{2}(x)\right)\ast \left(g_{1}(x)+ug_{2}(x)\right) \\
& = & f_{1}(x)g_{1}(x)+f_{1}(x)\ast ug_{2}(x)+uf_{2}(x)g_{1}(x).
\end{eqnarray*}
Using Lemma \ref{partaker}, we know that $f_{1}(x)u=uk(x)$ for some $k(x)\in%
\mathbb{F}_{p}[x].$ Hence, 
\begin{eqnarray*}
x^{n}-1 & = & f(x)\ast g(x) \\
& = & f_{1}(x)g_{1}(x)+f_{1}(x)\ast ug_{2}(x)+uf_{2}(x)g_{1}(x) \\
& = & f_{1}(x)g_{1}(x)+uk(x)g_{2}(x)+uf_{2}(x)g_{1}(x) \\
& = & f_{1}(x)g_{1}(x)+u\left(k(x)g_{2}(x)+f_{2}(x)g_{1}(x)\right).
\end{eqnarray*}
Suppose $x^{n}-1$ $=f_{1}(x)g_{1}(x)+r_{1}(x)$ in the ring $\mathbb{F}%
_{p}[x].$ Since $\mathbb{F}_{p}[x]$ is a subring of $R\left[x;\theta\right]$,
$x^{n}-1$ $=f_{1}(x)g_{1}(x)+r_{1}(x)$ in the ring $R\left[x;\theta%
\right]$ as well. Hence, 
\begin{eqnarray*}
x^{n}-1 & = & f_{1}(x)g_{1}(x)+u\left(k(x)g_{2}(x)+f_{2}(x)g_{1}(x)\right) \\
& = & x^{n}-1-r_{1}(x)+u\left(k(x)g_{2}(x)+f_{2}(x)g_{1}(x)\right) \\
r_{1}(x) & = & u\left(k(x)g_{2}(x)+f_{2}(x)g_{1}(x)\right).
\end{eqnarray*}
But $r_{1}(x)\in\mathbb{F}_{p}[x].$ This is a contradiction unless $%
r_{1}(x)=0$ and then $x^{n}-1$ $=f_{1}(x)g_{1}(x)$ in the ring $\mathbb{F}%
_{p}[x].$
\end{proof}

Note that in the above Lemma $f_{1}(x)$ and $g_{1}(x)$ are unique
polynomials because the ring $\mathbb{F}_{p}[x]$ $\ $is a unique
factorization ring; however, $f_{2}(x)$ and $g_{2}(x)$ are not unique. This
is justified by noting that the ring $R\left[ x;\theta \right] $ is not a
unique factorization ring. We know that \label{units1}$U(R)=\mathbb{F}%
_{p}^{\ast }+u\mathbb{F}_{p}.$ Based on this fact, we try to find $%
U(R[x;\theta ]).$


\begin{lemma}
\label{units2}$U\left(R\left[x;\theta\right]\right)=\left\{ a+uh(x)|a\in%
\mathbb{F}_{p}^{\ast},\text{ and }h(x)\in\mathbb{F}_{p}\left[x\right]%
\right\} $
\end{lemma}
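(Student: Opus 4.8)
The plan is to establish the two inclusions separately: the containment $\supseteq$ by exhibiting an explicit inverse, and the containment $\subseteq$ by reducing modulo $u$.

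For $\supseteq$, given $a+uh(x)$ with $a\in\mathbb{F}_{p}^{\ast}$ and $h(x)\in\mathbb{F}_{p}[x]$, I claim that $w(x)=a^{-1}-ua^{-2}h(x)$ is a two-sided inverse. To check $(a+uh(x))\ast w(x)=1$ and $w(x)\ast(a+uh(x))=1$, I expand both products by the distributive law and rule~(\ref{eq:1}), using two facts: any product of two elements of $u\mathbb{F}_{p}[x]$ vanishes because it carries a factor $u^{2}=0$ (this kills one of the four terms in each expansion), and every coefficient of $a^{-1}$ and of $h(x)$ lies in $\mathbb{F}_{p}$, hence is fixed by $\theta$, so $u$ slides past these factors exactly as in the commutative case (the remark following Lemma~\ref{partaker} is the general form of this). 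The two surviving cross terms are $\pm\,ua^{-1}h(x)$, which cancel, leaving $1$.

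For $\subseteq$, let $v(x)\in U(R[x;\theta])$ and pick $w(x)$ with $v(x)\ast w(x)=1$. Grouping coefficients according to $R=\mathbb{F}_{p}+u\mathbb{F}_{p}$, write $v(x)=v_{1}(x)+uv_{2}(x)$ and $w(x)=w_{1}(x)+uw_{2}(x)$ with all four polynomials in $\mathbb{F}_{p}[x]$. Using Lemma~\ref{partaker} to rewrite $v_{1}(x)\ast u=uv_{1}'(x)$ and discarding the $u^{2}$-term $uv_{2}(x)\ast uw_{2}(x)=0$ as above, the product becomes $v(x)\ast w(x)=v_{1}(x)w_{1}(x)+u(v_{1}'(x)w_{2}(x)+v_{2}(x)w_{1}(x))$, with the displayed polynomial products now taken in the ordinary ring $\mathbb{F}_{p}[x]$. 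Equating with $1$ and separating the part in $\mathbb{F}_{p}[x]$ from the part in $u\mathbb{F}_{p}[x]$ forces $v_{1}(x)w_{1}(x)=1$ in $\mathbb{F}_{p}[x]$; since $\mathbb{F}_{p}[x]$ is an integral domain, $v_{1}(x)$ must be a nonzero constant $a\in\mathbb{F}_{p}^{\ast}$. Therefore $v(x)=a+uv_{2}(x)$ with $v_{2}(x)\in\mathbb{F}_{p}[x]$, which is exactly the asserted form.

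The only real care needed is bookkeeping the non-commutativity: each time $u$ (or a factor that might carry a $u$-component) is moved past another factor, one must invoke rule~(\ref{eq:1}) or Lemma~\ref{partaker} rather than naive commutation, and one must keep track of which side the scalars sit on. Once the expansions are arranged so that all $u^{2}$-products are dropped and all $\mathbb{F}_{p}[x]$-coefficients are treated as $\theta$-fixed, both directions collapse to elementary identities over $\mathbb{F}_{p}$, so I do not anticipate a genuine obstacle beyond this organization.
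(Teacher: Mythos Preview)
Your proof is correct and follows essentially the same approach as the paper: exhibit the explicit inverse $a^{-1}-ua^{-2}h(x)$ (the paper writes this as $a^{-1}-a^{-1}uha^{-1}$, which is the same element since $a\in\mathbb{F}_{p}^{\ast}$ is central) for the $\supseteq$ direction, and for $\subseteq$ decompose a unit as $v_{1}+uv_{2}$, expand the product with its inverse, and read off $v_{1}w_{1}=1$ in $\mathbb{F}_{p}[x]$ to force $v_{1}$ constant. Your write-up is in fact slightly more careful than the paper's, since you verify the inverse on both sides and are explicit about where Lemma~\ref{partaker} and $\theta$-invariance of $\mathbb{F}_{p}$ are used.
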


\begin{proof}
Let $a+uh(x)\in R\left[ x;\theta \right] $ where $a\in \mathbb{F}_{p}^{\ast }
$ and $h(x)\in \mathbb{F}_{p}\left[ x\right] .$ Then%
\begin{eqnarray*}
\left( a+uh\right) \left( a^{-1}-a^{-1}uha^{-1}\right) 
&=&1-uha^{-1}+uha^{-1}-uha^{-1}uha^{-1} \\
&=&1-uha^{-1}uha^{-1}=1-u^{2}h_{1}a^{-1}ha^{-1}\text{ (using Lemma \ref%
{partaker})} \\
&=&1.
\end{eqnarray*}%
Hence, $a+uh(x)\in U\left( R\left[ x;\theta \right] \right) $. Conversely,
let $f\in U\left( R\left[ x;\theta \right] \right) $. Then, there exists $%
g\in R\left[ x;\theta \right] $ such that $fg=gf=1$. Let $f=f_{1}+uf_{2}$
and $g=g_{1}+ug_{2}$ for $f_{i},g_{i}\in \mathbb{F}_{p}[x]$. Then, $%
fg=(f_{1}+uf_{2})(g_{1}+ug_{2})=1$ implies that $f_{1}g_{1}=1$ and $%
uf_{2}g_{1}+f_{1}ug_{2}=0$. Hence, $f_{1}$ is a non-zero constant
polynomial. That is, $f_{1}\in \mathbb{F}_{p}^{\ast }$. Thus, $f=f_{1}+uf_{2}
$, where $f_{1}\in \mathbb{F}_{p}^{\ast }$ and $f_{2}\in \mathbb{F}_{p}[x]$.
\end{proof}

Let $C$ be be a nonzero skew cyclic code over $R$ and let $%
c(x)=\left(a_{0}+a_{1}x+\ldots+a_{n-1}x^{n-1}\right)\in C.$ If $a_{n-1}$ is
a unit in $R$ with inverse $w$ then $wc(x)$ is a monic polynomial in $C.$
Hence, for any nonzero skew cyclic code we have the following cases to
consider:

\textbf{Case1:} $C$ has no monic polynomials

\textbf{Case2:} $C$ has at least one monic polynomial.

The next lemma classifies all skew cyclic codes that satisfy Case 1.

\begin{lemma}
\label{nonmonic}Let $C$ be a nonzero skew cyclic code that has no monic
polynomials. Then $C=\left\langle u\overline{a(x)}\right\rangle $ where $%
\overline{~a(x)}$ is a polynomial of minimal degree in $C$ and $x^{n}-1=%
\overline{b(x)}~\overline{a(x)}$ in $\ \mathbb{F}_{p}[x].$
\end{lemma}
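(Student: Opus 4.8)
The plan is to show that the hypothesis forces every element of $C$ to be a multiple of $u$, and then to recognize $C$ as the image of a classical cyclic code over $\mathbb{F}_p$ sitting inside $R_n$.

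First I would record that the non-units of $R$ are exactly the elements $ub$ with $b\in\mathbb{F}_p$ (since $U(R)=\mathbb{F}_p^{\ast}+u\mathbb{F}_p$); because a polynomial in $C$ with a unit leading coefficient can be rescaled to a monic one, the hypothesis says that every nonzero $c(x)\in C$ has leading coefficient of the form $ub$ with $b\in\mathbb{F}_p^{\ast}$. Next, consider the reduction-mod-$u$ map $\pi\colon R[x;\theta]\to\mathbb{F}_p[x]$, $\sum(a_i+ub_i)x^i\mapsto\sum a_ix^i$. Because $g(x)\ast u=ug'(x)$ by Lemma \ref{partaker}, the set $uR[x;\theta]$ is a two-sided ideal, and since $\theta$ fixes $\mathbb{F}_p$ the quotient is the commutative ring $\mathbb{F}_p[x]$; thus $\pi$ is a ring homomorphism, and it descends to $\bar\pi\colon R_n\to\mathbb{F}_p[x]/\langle x^n-1\rangle$. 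Then $\bar\pi(C)$ is closed under multiplication by $\bar\pi(R[x;\theta])=\mathbb{F}_p[x]$, i.e.\ it is an ideal, hence a classical cyclic code $\langle\bar g(x)\rangle$ with $\bar g\mid x^n-1$ monic.

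The crux is to prove $\bar\pi(C)=0$. Suppose not; then $\bar g$ has degree $<n$, and I would lift it to $c(x)=\bar g(x)+uc_2(x)\in C$, using the right division algorithm (Theorem \ref{Division}) to pick the representative of degree $<n$, so that the mod-$u$ part is exactly $\bar g$. Then $u\ast c(x)=u\bar g(x)\in C$, and, since the partaker map $\sum q_ix^i\mapsto\sum q_is^ix^i$ is a bijection of $\mathbb{F}_p[x]$, Lemma \ref{partaker} gives $uq(x)\bar g(x)\in C$ for every $q\in\mathbb{F}_p[x]$. Writing $c_2=q\bar g+r$ with $\deg r<\deg\bar g$ and subtracting $uq\bar g$ from $c$ produces $\bar g(x)+ur(x)\in C$, whose leading coefficient is $1$ — a monic polynomial in $C$, contradicting the hypothesis. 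I expect this to be the main obstacle: it is where the hypothesis gets used in a non-obvious way, and it requires care both with the reductions modulo $x^n-1$ (keeping all degrees below $n$ so no wrap-around occurs) and with the non-commutative multiplication via the partaker.

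Finally, with $\bar\pi(C)=0$ every element of $C$ has the form $[uh(x)]$ with $h\in\mathbb{F}_p[x]$ of degree $<n$; write $uR_n$ for the submodule of all such elements. The map $\phi\colon\mathbb{F}_p[x]/\langle x^n-1\rangle\to uR_n$, $[h]\mapsto[uh]$, is a group isomorphism (injectivity from $\pi$ together with the fact that $u\psi=0$ in $R[x;\theta]$ forces $\psi\in uR[x;\theta]$), and a short computation shows $r(x)\ast[uh]=[u\,r'(x)h(x)]$ where $r'$ is the partaker of $r$. Since $r\mapsto r'$ is surjective onto $\mathbb{F}_p[x]$, $\phi$ carries $R[x;\theta]$-submodules of $uR_n$ onto ideals of $\mathbb{F}_p[x]/\langle x^n-1\rangle$, so $\phi^{-1}(C)=\langle\overline{a(x)}\rangle$ for the monic polynomial $\overline{a(x)}$ of least degree in it, which divides $x^n-1$, say $x^n-1=\overline{b(x)}\,\overline{a(x)}$ in $\mathbb{F}_p[x]$. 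Pulling back, $C=\langle u\overline{a(x)}\rangle$, and $u\overline{a(x)}$ — the least-degree polynomial of $C$ up to a unit scalar — has degree $\deg\overline{a(x)}$, which is exactly the claimed minimality. This completes the argument.
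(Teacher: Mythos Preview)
Your proof is correct, but the route differs substantially from the paper's. The paper argues directly from a minimal-degree element $a(x)\in C$: first $ua(x)$ has smaller degree, hence is zero, so $a(x)=u\overline{a(x)}$; then, to show that \emph{every} $c(x)\in C$ lies in $u\mathbb{F}_p[x]$, the paper performs an explicit top-coefficient cancellation. Given a codeword $c(x)$ whose $(t-1)$-st coefficient is a unit, it forms $Z(x)=z_1(x)-z_2(x)$ where $z_1$ is a shift of $a(x)$ scaled so that the degree-$t$ terms match $z_2=c_t^{-1}c(x)$; the resulting degree-$(t-1)$ element of $C$ then has a unit leading coefficient, contradicting the hypothesis. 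This peels off coefficients one by one. By contrast, you work structurally: the reduction $\bar\pi:R_n\to\mathbb{F}_p[x]/\langle x^n-1\rangle$ turns $C$ into a classical cyclic code, and you kill $\bar\pi(C)$ in a single stroke by lifting its generator $\bar g$, subtracting an appropriate element $uq\bar g\in C$ (available via surjectivity of the partaker map), and obtaining a monic $\bar g+ur\in C$. The remaining identification $C\leftrightarrow\phi^{-1}(C)\subseteq\mathbb{F}_p[x]/\langle x^n-1\rangle$ then reduces everything to the classical theory. Your argument is shorter and more conceptual, and makes transparent why the structure of $C$ is governed by an ordinary cyclic code over $\mathbb{F}_p$; the paper's argument is more elementary, avoiding any appeal to quotient rings or module correspondences, at the cost of the somewhat intricate coefficient-chasing with $z_1,z_2$.
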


\begin{proof}
Suppose $C$ is a nonzero skew cyclic code that has no monic polynomials and
suppose that 
\begin{equation*}
\theta\left(a+ub\right)=a+usb\text{ where }s\in\mathbb{F}_{p}^{\ast}. 
\end{equation*}
Let 
\begin{equation*}
a(x)=a_{0}+a_{1}x+\ldots+u\overline{a_{r}}x^{r} 
\end{equation*}
be a polynomial of minimal degree in $C$ where $\overline{a_{r}}\in\mathbb{F}%
_{p}^{\ast}$ and $a_{i}\in R$ for all $i=0,1,\ldots,r-1.$ Note that 
\begin{equation*}
ua(x)=ua_{0}+ua_{1}x+\ldots+ua_{r-1}x^{r-1}\in C. 
\end{equation*}
Since $a(x)$ is of minimal degree in $C,$  $ua(x)=0$ and 
\begin{equation*}
a(x)=u\overline{a(x)}, 
\end{equation*}
where $\overline{a(x)}\in\mathbb{F}_{p}\left[x\right]$ and $a_{i}=u\overline{%
a_{i}}$ for all $i=0,1,\ldots,r.$ Let $c(x)$ be any codeword in $C.$ Then, $%
c(x)$ is not monic. Hence, $c(x)=c_{0}+c_{1}x+\ldots+u\overline{c_{t}}x^{t}$
where $t\leq n-1,~\overline{c_{t}}\in\mathbb{F}_{p}^{\ast}$ and $c_{i}\in R.$
for all $i=0,1,\ldots,t-1.$ We want to prove that $c(x)=u\overline{c(x)}.$
Write $c(x)=c_{1}(x)+c_{2}(x)$ where all terms in $c_{1}(x)$ have powers
less than $r$ while all terms in $c_{2}(x)$ have powers larger or equal $r.$
Suppose $c_{t-1}$ is a unit. Note that $\theta^{i}\left(u\right)=us^{i}.$
Consider the polynomial $Z(x)=z_{1}(x)-z_{2}(x)\in C,$ where 
\begin{eqnarray*}
z_{1}(x) & = & \left(s^{t-r}\right)^{-1}\left(a_{r}\right)^{-1}x^{t-r}a(x) \\
& = &
\left(s^{t-r}\right)^{-1}\left(a_{r}\right)^{-1}a_{0}\theta^{^{t-r}}\left(u%
\right)x^{^{t-r}}+\left(s^{^{t-r}}\right)^{-1}\left(a_{r}\right)^{-1}a_{1}%
\theta^{t-r}\left(u\right)x^{t-r+1}+ \\
& &
\ldots+\left(s^{^{t-r}}\right)^{-1}\left(a_{r}\right)^{-1}a_{r-1}%
\theta^{^{t-r}}\left(u\right)x^{t-1}+\left(s^{^{t-r}}\right)^{-1}\left(a_{r}%
\right)^{-1}a_{r}\theta^{^{t-r}}\left(u\right)x^{t} \\
& = &
\left(s^{^{t-r}}\right)^{-1}\left(a_{r}\right)^{-1}a_{0}\theta^{^{t-r}}%
\left(u\right)x^{^{t-r}}+\left(s^{^{t-r}}\right)^{-1}\left(a_{r}%
\right)^{-1}a_{1}\theta^{^{t-r}}\left(u\right)x^{t-r+1}+ \\
& &
\ldots+\left(s^{^{t-r}}\right)^{-1}\left(a_{r}\right)^{-1}a_{r-1}%
\theta^{^{t-r}}\left(u\right)x^{t-1}+ux^{t},
\end{eqnarray*}
and 
\begin{equation*}
z_{2}(x)=\left(c_{t}\right)^{-1}c(x)=\left(c_{t}\right)^{-1}c_{0}+%
\left(c_{t}\right)^{-1}c_{1}x+\ldots+\left(c_{t}%
\right)^{-1}c_{t-1}x^{t-1}+ux^{t}. 
\end{equation*}
Hence, $Z(x)=z_{1}(z)-z_{2}(z)$ is a polynomial of $\deg(t-1)$ in $C$ where
the coefficient of $x^{t-1}$ is $z_{t-1}=\left(s^{^{t-r}}\right)^{-1}%
\left(a_{r}\right)^{-1}a_{r-1}\theta^{^{t-r}}\left(u\right)-\left(c_{t}%
\right)^{-1}c_{t-1}=\eta u-\left(c_{t}\right)^{-1}c_{t-1}$ where $%
\left(c_{t}\right)^{-1}c_{t-1}$ is a unit. By Lemma \ref{units1}, $z_{t-1}$
is a unit and hence $C$ has a monic polynomial. This is a contradiction
since $C$ has no monic polynomials. Using the same procedure we can show
that $c_{i}$ is not a unit for all $c_{i}\in c_{2}(x).$ Suppose that $c_{i}$
is a unit for some $i$ in $c_{1}(x).$ Then $uc(x)=uc_{1}(x)\in C$ and $%
uc_{1}(x)$ is a a nonzero polynomial with $\deg uc_{1}(x)<\deg a(x).$ Again,
this is a contradiction. Hence, 
\begin{equation*}
c(x)=u\overline{c(x)}, 
\end{equation*}
where $\overline{c(x)}\in\mathbb{F}_{p}[x]$ and $c_{i}=u\overline{c_{i}}$
for all $i=0,1,\ldots,t.$ Since $\overline{a(x)}$ and $\overline{c(x)}$ are
two polynomials in $\mathbb{F}_{p}[x],$  by the division algorithm,
there exist polynomials $q(x),r(x)$ in $\mathbb{F}_{p}[x]$ such that 
\begin{equation*}
\overline{c(x)}=q(x)\overline{a(x)}+r(x) 
\end{equation*}
where $r(x)=0$ or $\deg r(x)<\deg\overline{a(x)}=\deg a(x).$ Hence, using
Lemma \ref{partaker}, we get 
\begin{eqnarray*}
u\overline{c(x)} & = & uq(x)\overline{a(x)}+ur(x) \\
& = & q^{\prime}(x)u\overline{a(x)}+ur(x).
\end{eqnarray*}
This implies that 
\begin{equation*}
ur(x)=u\overline{c(x)}-q^{\prime}(x)u\overline{a(x)}\in C. 
\end{equation*}
This is a contradiction because $\deg ur(x)<\deg\overline{a(x)}=\deg a(x).$
Therefore, $ur(x)=0.$ Since $r(x)\in\mathbb{F}_{p}[x],$ then $r(x)=0$ and 
\begin{equation*}
u\overline{c(x)}=q^{\prime}(x)u\overline{a(x)}. 
\end{equation*}
Hence, $C=\left\langle u\overline{a(x)}\right\rangle .$ Again, since $%
\overline{a(x)}$ and $x^{n}-1$ are polynomials in $\mathbb{F}_{p}[x],$ then
by the division algorithm, there exist polynomials$\overline{\text{ }b(x)}%
,r_{1}(x)$ in $\mathbb{F}_{p}[x]$ such that 
\begin{equation*}
x^{n}-1=\overline{b(x)}~\overline{a(x)}+r_{1}(x), 
\end{equation*}
where $r_{1}(x)=0$ or $\deg r_{1}(x)<\deg\overline{a(x)}=\deg a(x).$ Hence, 
\begin{eqnarray*}
u\left(x^{n}-1\right) & = & u\text{ }\overline{b(x)}~\overline{a(x)}%
+ur_{1}(x) \\
& = & q_{1}^{\prime}(x)u\overline{a(x)}+ur_{1}(x).
\end{eqnarray*}
In the ring $R_{n}=R[x;\theta]/\langle x^{n}-1\rangle,$ we get that 
\begin{equation*}
0=q_{1}^{\prime}(x)u\overline{a(x)}+ur_{1}(x) 
\end{equation*}
or, 
\begin{equation*}
ur_{1}(x)=-q_{1}^{\prime}(x)u\overline{a(x)}\in C. 
\end{equation*}
A contradiction. Hence, $ur_{1}(x)=0.$ Since $r_{1}(x)\in\mathbb{F}_{p}[x],$
 $r_{1}(x)=0$ and 
\begin{equation*}
x^{n}-1=\overline{b(x)}~\overline{a(x)}. 
\end{equation*}
\end{proof}

\begin{lemma}
\label{monic1}Let $C$ be a nonzero skew cyclic code that has at least one
monic polynomial and let $g(x)$ be a polynomial of minimal degree in $C.$
Suppose that $g(x)$ is monic. Then $C=\left\langle g(x)\right\rangle $ where 
$x^{n}-1=k(x)g(x)$ in $R_{n}.$
\end{lemma}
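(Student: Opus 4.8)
The plan is to reproduce the classical generator-polynomial argument for cyclic codes, now carried out inside the skew quotient $R_{n}$, with the Right Division Algorithm (Theorem~\ref{Division}) as the main tool; it is available precisely because a monic $g(x)$ has leading coefficient $1\in U(R)$. I would prove the two inclusions $\langle g(x)\rangle\subseteq C$ and $C\subseteq\langle g(x)\rangle$ separately, where $\langle g(x)\rangle$ denotes the left $R[x;\theta]$-submodule of $R_{n}$ generated by the class of $g(x)$.

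For the first inclusion there is nothing to do: $g(x)\in C$ and, by Lemma~\ref{submodule}, $C$ is a left $R[x;\theta]$-submodule of $R_{n}$, so every left multiple $q(x)\ast g(x)$ lies in $C$; hence $\langle g(x)\rangle\subseteq C$. For the reverse inclusion I would take an arbitrary $c(x)\in C$, regard it as a polynomial of degree $<n$, and divide it on the right by $g(x)$: since the leading coefficient of $g(x)$ is a unit, Theorem~\ref{Division} gives $q(x),r(x)\in R[x;\theta]$ with $c(x)=q(x)\ast g(x)+r(x)$ and $r(x)=0$ or $\deg r(x)<\deg g(x)$. Reducing modulo $\langle x^{n}-1\rangle$, the element $r(x)=c(x)-q(x)\ast g(x)$ is a difference of two elements of $C$, hence lies in $C$; since $g(x)$ has minimal degree among the nonzero elements of $C$, we are forced to have $r(x)=0$, so $c(x)=q(x)\ast g(x)\in\langle g(x)\rangle$. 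This yields $C=\langle g(x)\rangle$.

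It then remains to produce $k(x)$ with $x^{n}-1=k(x)g(x)$. I would divide $x^{n}-1$ on the right by $g(x)$ in $R[x;\theta]$, getting $x^{n}-1=k(x)\ast g(x)+r_{1}(x)$ with $r_{1}(x)=0$ or $\deg r_{1}(x)<\deg g(x)\le n-1$. Passing to $R_{n}$ the left-hand side becomes $0$, while the image of $k(x)\ast g(x)$ is the module action of $k(x)$ on the codeword $g(x)$, which lies in $C$; therefore $r_{1}(x)=-k(x)\ast g(x)$ in $R_{n}$ also lies in $C$, and since $\deg r_{1}(x)<\deg g(x)$ its only chance to be a codeword is $r_{1}(x)=0$. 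Hence $x^{n}-1=k(x)\ast g(x)$ already in $R[x;\theta]$, and a fortiori $x^{n}-1=k(x)g(x)$ holds in $R_{n}$.

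I do not expect a serious obstacle; the argument is essentially bookkeeping once the right-division machinery is in place. The two points that need care are: (i) checking the hypothesis of Theorem~\ref{Division} — this is exactly where monicity of $g(x)$ enters, since it supplies a unit leading coefficient, so the division can be performed in both steps above; and (ii) remembering that $R_{n}$ is in general only a left $R[x;\theta]$-module and not a ring (it is a ring precisely when $e\mid n$, by the Corollary following Theorem~2), so in the last step one must read the image of $k(x)\ast g(x)$ as the module action of $k(x)$ on the class of $g(x)$ rather than as a product computed inside $R_{n}$; membership in $C$ then follows simply because $C$ is a submodule.
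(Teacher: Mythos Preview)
Your proposal is correct and is precisely the argument the paper has in mind: the paper's own proof consists of the single sentence ``The proof is a straightforward application of Theorem~\ref{Division},'' and what you have written is exactly that straightforward application spelled out in full. Your attention to the two delicate points --- that monicity is what licenses the use of Theorem~\ref{Division}, and that in $R_{n}$ one must read $k(x)\ast g(x)$ as the module action rather than a ring product --- is well placed and matches the paper's setup in Lemma~\ref{submodule}.
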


\begin{proof}
The proof is a straightforward application of Theorem \ref{Division}.
\end{proof}

\begin{lemma}
\label{monic2}Let $C$ be a nonzero skew cyclic code that has at least one
monic polynomial. Moreover, suppose that all polynomials of minimal degree
are not monic. Let $a(x)$ be a polynomial of minimal degree in $C.$ Let $%
g(x) $ be a monic polynomial in $C$ of minimal degree among all monic
polynomials in $C.$ Then, $C=\left\langle g(x)+up(x),a(x)\right\rangle
=\left\langle g(x)+up(x),u\overline{a(x)}\right\rangle ,$ where $%
x^{n}-1=k(x)\ast g(x)$ in $R[x;\theta],$ $x^{n}-1=\overline{b(x)}~\overline{%
a(x)}$ in $\mathbb{F}_{p}[x] $ and $\deg a(x)<\deg g(x).$
\end{lemma}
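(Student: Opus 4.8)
The plan is to establish the two bracketed descriptions of $C$ together: first pin down the minimal--degree codeword $a(x)$, then prove $\langle g(x)+up(x),u\overline{a(x)}\rangle\subseteq C$ (immediate) and the reverse inclusion by induction on degree, and finally read off the divisibility relations from the right division algorithm (Theorem \ref{Division}). First I would argue, as in the proof of Lemma \ref{nonmonic}, that the leading coefficient of $a(x)$ is not a unit of $R$: otherwise, multiplying $a(x)$ by the inverse of its leading coefficient would give a monic codeword of minimal degree, against the hypothesis. So the leading coefficient has the form $u\beta$ with $\beta\in\mathbb{F}_{p}^{\ast}$; then $ua(x)\in C$ has degree strictly less than $\deg a(x)$, forcing $ua(x)=0$ by minimality, hence every coefficient of $a(x)$ is a $u$--multiple, i.e.\ $a(x)=u\overline{a(x)}$ for some $\overline{a(x)}\in\mathbb{F}_{p}[x]$. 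Thus $\langle a(x)\rangle=\langle u\overline{a(x)}\rangle$ and the two descriptions of $C$ coincide verbatim, so it suffices to prove one. I would also record that, writing the minimal monic codeword as $g(x)+up(x)$ with $g(x)\in\mathbb{F}_{p}[x]$ monic, its leading coefficient $1$ lies in $\mathbb{F}_{p}$, so $\deg p(x)<\deg g(x)$; and that $\deg a(x)<\deg g(x)$, since equality would make $g(x)+up(x)$ itself a monic codeword of minimal degree. (Reducing $p(x)$ modulo $\overline{a(x)}$ and absorbing the difference into $\langle u\overline{a(x)}\rangle$ via Lemma \ref{partaker}, one may also take $\deg p(x)<\deg\overline{a(x)}$, but this is not needed for the module identity.)

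The inclusion $\langle g(x)+up(x),u\overline{a(x)}\rangle\subseteq C$ holds because both generators are codewords. For the reverse inclusion I would show by strong induction on $\deg c$ that every $c(x)\in C$ lies in $\langle g(x)+up(x),u\overline{a(x)}\rangle$, with $c=0$ as trivial base. Let $c\neq 0$ have degree $t$ and leading coefficient $c_{t}$. If $t\geq\deg g$, then $(c_{t}x^{\,t-\deg g})\ast(g(x)+up(x))$ is a left multiple of $g(x)+up(x)$ with leading term $c_{t}x^{t}$ (as $g(x)+up(x)$ is monic), so subtracting it from $c(x)$ produces a codeword of smaller degree, to which induction applies. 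If $t<\deg g$ and $c_{t}$ is a unit, then $c_{t}^{-1}c(x)$ is a monic codeword of degree $t<\deg g$, contradicting the choice of $g(x)+up(x)$; so this case does not arise. If $t<\deg g$ and $c_{t}$ is a non--unit, write $c_{t}=u\beta$ with $\beta\in\mathbb{F}_{p}^{\ast}$; since $c\neq 0$ and $\deg a(x)$ is minimal, $t\geq\deg a(x)$, and using $\theta^{j}(u\gamma)=us^{j}\gamma$ one chooses $\delta\in\mathbb{F}_{p}^{\ast}$ so that the left multiple $(\delta x^{\,t-\deg a})\ast u\overline{a(x)}\in\langle u\overline{a(x)}\rangle$ also has leading term $u\beta x^{t}$; subtracting it drops the degree, and induction again closes the step. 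Hence $C=\langle g(x)+up(x),u\overline{a(x)}\rangle=\langle g(x)+up(x),a(x)\rangle$.

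For the divisibility relations, dividing $x^{n}-1$ by $\overline{a(x)}$ in $\mathbb{F}_{p}[x]$ gives $x^{n}-1=\overline{b(x)}\,\overline{a(x)}+\rho(x)$ with $\rho=0$ or $\deg\rho<\deg a(x)$; multiplying by $u$ and passing to $R_{n}$, where $u(x^{n}-1)=0$ and $u\overline{b(x)}\,\overline{a(x)}$ is a left multiple of $u\overline{a(x)}$ (Lemma \ref{partaker}) hence lies in $C$, forces $u\rho(x)\in C$, so $u\rho(x)=0$ by minimality of $\deg a(x)$, so $\rho=0$ and $x^{n}-1=\overline{b(x)}\,\overline{a(x)}$. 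An analogous right division of $x^{n}-1$ by the monic polynomial $g(x)+up(x)$, together with the module description just obtained (which puts the low--degree remainder into $\langle u\overline{a(x)}\rangle$), yields the stated relation for $g(x)$. The hard part will be the reverse inclusion above, and within it the last case: a codeword already reduced below degree $\deg g$ may still carry a non--unit leading term $u\beta x^{t}$, and it is removable only by a left multiple of $u\overline{a(x)}$ --- which is available precisely because every nonzero codeword has degree at least $\deg a(x)$ (so $t\geq\deg a(x)$) and because $\theta$ scales $u$ by a unit scalar, which lets the two leading coefficients be matched. A secondary point is the bookkeeping between $R[x;\theta]$ and $R_{n}=R[x;\theta]/\langle x^{n}-1\rangle$; it causes no trouble because all polynomials involved have degree $<n$, so reduction modulo $x^{n}-1$ is invisible.
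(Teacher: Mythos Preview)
Your proposal is correct and follows the same overall strategy as the paper: show $a(x)=u\overline{a(x)}$, reduce an arbitrary codeword first past the minimal monic and then past $u\overline{a(x)}$, and extract the divisibility relations by dividing $x^{n}-1$. The execution differs in one place worth noting. After dividing $c(x)$ by the minimal monic $f(x)$, the paper obtains a remainder $r_{2}(x)\in C$ of degree less than $\deg f$ and then argues, by re-running the coefficient-by-coefficient manipulation from Lemma~\ref{nonmonic}, that \emph{every} coefficient of $r_{2}$ is a $u$-multiple, i.e.\ $r_{2}=ur_{3}$; only then does it divide $r_{3}$ by $\overline{a}$ in $\mathbb{F}_{p}[x]$. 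Your induction sidesteps that intermediate claim entirely: you only ever need that the \emph{leading} coefficient of a low-degree codeword is a non-unit (else it would be monic of degree below $\deg g$), and you peel it off with a single left multiple of $u\overline{a(x)}$. This is a cleaner route to the same inclusion and avoids importing the somewhat delicate argument of Lemma~\ref{nonmonic}.

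There is also a small notational divergence at the end. The paper, after obtaining $x^{n}-1=q_{3}(x)f_{1}(x)$ in $\mathbb{F}_{p}[x]$, invokes Lemma~\ref{factorization} to manufacture a right divisor $g(x)=f_{1}(x)+ul_{1}(x)$ of $x^{n}-1$ in $R[x;\theta]$ and rewrites $f=g+up$ with $p=f_{2}-l_{1}$. You instead take $g(x)=f_{1}(x)\in\mathbb{F}_{p}[x]$ directly, which already right-divides $x^{n}-1$ since $\mathbb{F}_{p}[x]$ is a subring; this is simpler and still matches the stated conclusion. Your sketch of that last step (``yields the stated relation for $g(x)$'') is terse; when you write it out, make explicit that projecting the identity $x^{n}-1=q(x)\ast(g+up)+R(x)$ with $R\in\langle u\overline{a}\rangle$ onto $\mathbb{F}_{p}[x]$ is what gives $x^{n}-1=q_{1}(x)g(x)$.
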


\begin{proof}
Let $C$ ~be a skew cyclic code that has at least one monic polynomial and
let $a(x)$ be a polynomial of minimal degree in $C.$ Since $a(x)$ is not
monic and of minimal degree in $C,$ as in the proof of Lemma \ref%
{nonmonic}, we can show that $a(x)=$ $u\overline{a(x)}$ where $\overline{a(x)%
}\in\mathbb{F}_{p}\left[x\right].$ Suppose that $c(x)$ is a codeword in $C.$
Let $f(x)$ be a monic polynomial of minimal degree in $C.$ Then using
Theorem \ref{Division}, there exist two polynomials $q_{2}(x)$ and $r_{2}(x)$
in $R_{n}$ such that 
\begin{equation*}
c(x)=q_{2}(x)\ast f(x)+r_{2}(x) 
\end{equation*}
where $r_{2}(x)=0$ or $\deg r_{2}(x)<\deg f(x).$ Then 
\begin{equation*}
r_{2}(x)=c(x)-q_{2}(x)\ast f(x)\in C. 
\end{equation*}
Since $f(x)$ is a monic polynomial of minimal degree in $C,$  $r_{2}(x)$
is not monic. In fact, as in the proof of Lemma \ref{nonmonic}, one can
easily show that $r_{2}(x)=ur_{3}(x)$ for some polynomial $r_{3}(x)\in%
\mathbb{F}_{p}[x].$ Now, apply the division algorithm on $\overline{a(x)}$
and $r_{3}(x)$ to obtain 
\begin{equation*}
r_{3}(x)=q_{3}(x)\overline{a(x)}+r_{4}(x), 
\end{equation*}
where $r_{4}(x)=0$ or $\deg r_{4}(x)<\deg\overline{a(x)}.$ Hence, 
\begin{eqnarray*}
r_{2}(x) & = & ur_{3}(x)=uq_{3}(x)\overline{a(x)}+ur_{4}(x) \\
& = & q_{4}u\overline{a(x)}+ur_{4}(x).
\end{eqnarray*}
Thus, $ur_{4}(x)\in C.$ Since $\deg r_{4}(x)<\deg\overline{a(x)}=\deg a(x)$, $ur_{4}(x)=0$ and 
\begin{equation*}
r_{2}(x)=ur_{3}(x)=uq_{3}(x)\overline{a(x)}=q_{4}u\overline{a(x)}. 
\end{equation*}
Therefore, 
\begin{eqnarray*}
c(x) & = & q_{2}(x)f(x)+r_{2}(x) \\
& = & q_{2}(x)f(x)+q_{4}u\overline{a(x)}.
\end{eqnarray*}
Thus $C=\left\langle f,a(x)\right\rangle =\left\langle f,u\overline{a(x)}%
\right\rangle .$ Note that $x^{n}-1$ and $\overline{a(x)}$ are polynomials
in $\mathbb{F}_{p}[x].$ Hence, by the division algorithm we can write 
\begin{equation*}
x^{n}-1=q_{5}\overline{a(x)}+r_{5} 
\end{equation*}
where $r_{5}(x)=0$ or $r_{5}(x)$ is a polynomial in $\mathbb{F}_{p}[x]$ with 
$\deg r_{5}(x)<\deg\overline{a(x)}.$ Then we have 
\begin{eqnarray*}
u\left(x^{n}-1\right) & = & uq_{5}\overline{a(x)}+ur_{5} \\
& = & q_{5}^{\prime}u\overline{a(x)}+ur_{5.}
\end{eqnarray*}
In $R_{n},$ we obtain 
\begin{equation*}
ur_{5}(x)=-q_{5}^{\prime}u\overline{a(x)}\in C 
\end{equation*}
with $\deg ur_{5}(x)=\deg r_{5}(x)<\deg\overline{a(x)}.$ This is a
contradiction unless $ur_{5}(x)=0.$ Since $r_{5}(x)\in\mathbb{F}_{p}[x]$,
we have $r_{5}(x)=0$ and 
\begin{equation*}
x^{n}-1=q_{5}\overline{a(x)} 
\end{equation*}
Moreover, let $f(x)=f_{1}(x)+uf_{2}(x)$ where $f_{1}(x),~f_{2}(x)\in\mathbb{F%
}_{p}[x].$ Then using Theorem \ref{Division}, we have 
\begin{equation*}
x^{n}-1=q_{3}(x)\left(f_{1}(x)+uf_{2}(x)\right)+R(x), 
\end{equation*}
where $R(x)=0$ or $\deg R(x)<\deg f(x)=\deg f_{1}(x).$ This implies that $%
R(x)\in C.$ Since $f(x)$ is a monic polynomial of minimal degree in $C$, we conclude that $R(x)$ is not monic. Hence, $R(x)=w(x)u\overline{a(x)}%
=uw_{1}(x)\overline{a(x)}$ (by Lemma \ref{partaker}). Thus 
\begin{eqnarray*}
x^{n}-1 & = & q_{3}(x)\left(f_{1}(x)+uf_{2}(x)\right)+uw_{1}(x)\overline{a(x)%
} \\
& = & q_{3}(x)f_{1}(x)+q_{3}(x)uf_{2}(x)+uw_{1}(x)\overline{a(x)} \\
& = & q_{3}(x)f_{1}(x)+uq_{4}(x)f_{2}(x)+uw_{1}(x)\overline{a(x)}.
\end{eqnarray*}
Hence, in the ring $\mathbb{F}_{p}[x],$ we have 
\begin{equation*}
x^{n}-1=q_{3}(x)f_{1}(x). 
\end{equation*}
By Lemma \ref{factorization}, there must be a polynomial $g(x)$ in $%
R[x;\theta]$ of degree $f_{1}(x)$ such that $g(x)$ is a right divisor of $%
x^{n}-1$ in $R[x;\theta].$ Hence, $g(x)=f_{1}(x)+ul_{1}(x)$ where $l_{1}(x)$
is a polynomial in $\mathbb{F}_{p}[x]$ of degree less than the degree of $%
f_{1}(x).$ Thus, 
\begin{eqnarray*}
f(x) & = & f_{1}(x)+uf_{2}(x) \\
& = & g(x)-ul_{1}(x)+uf_{2}(x) \\
& = & g(x)+u\left(f_{2}(x)-l_{1}(x)\right).
\end{eqnarray*}
Therefore, $C=\left\langle f,a(x)\right\rangle =\left\langle f,u\overline{%
a(x)}\right\rangle =\left\langle g(x)+u\left(f_{2}(x)-l_{1}(x)\right),u%
\overline{a(x)}\right\rangle $ with $x^{n}-1=k(x)g(x)$ in $R[x;\theta]$ and $%
x^{n}-1=\overline{b(x)}~\overline{a(x)}$ in $\mathbb{F}_{p}[x]$ and $\deg%
\overline{a(x)}<\deg g(x).$
\end{proof}

\begin{lemma}
\label{monic3}Let $C=\left\langle g(x)+up(x),a(x)\right\rangle =\left\langle
g(x)+up(x),u\overline{a(x)}\right\rangle $ as in Lemma \ref{monic2}. Then $%
\overline{a(x)}|g(x)\func{mod}u$ and $\dfrac{x^{n}-1}{g}up\in\left\langle
ua\right\rangle .$
\end{lemma}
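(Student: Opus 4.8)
The plan is to prove the two claims separately; both come down to the minimality of $\deg a(x)$ among nonzero elements of $C$, argued exactly as in the proof of Lemma~\ref{nonmonic}. I keep the notation of Lemma~\ref{monic2}: write $a(x)=u\overline{a(x)}$ with $\overline{a(x)}\in\mathbb{F}_{p}[x]$, write $g(x)=f_{1}(x)+ul_{1}(x)$ with $f_{1}(x),l_{1}(x)\in\mathbb{F}_{p}[x]$ so that $g(x)\bmod u=f_{1}(x)$, write the second generator as $g(x)+up(x)$ with $p(x)\in\mathbb{F}_{p}[x]$, and let $k(x)\in R[x;\theta]$ be the polynomial with $x^{n}-1=k(x)\ast g(x)$. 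The first thing I would do is isolate a small fact used twice: \emph{if $h(x)\in\mathbb{F}_{p}[x]$ has $\deg h<n$ and $u\,h(x)\in C$, then $\overline{a(x)}\mid h(x)$ in $\mathbb{F}_{p}[x]$ and $u\,h(x)\in\langle u\overline{a(x)}\rangle$}. This is the computation already inside the proof of Lemma~\ref{nonmonic}: divide $h=q\overline{a(x)}+r$ in $\mathbb{F}_{p}[x]$ with $r=0$ or $\deg r<\deg\overline{a(x)}$; as there, $u\,q(x)\overline{a(x)}$ is a left $R[x;\theta]$-multiple of $u\overline{a(x)}$ (via Lemma~\ref{partaker}), hence lies in $\langle u\overline{a(x)}\rangle\subseteq C$, so $u\,r(x)=u\,h(x)-u\,q(x)\overline{a(x)}\in C$; since the leading coefficient of $u\,r(x)$ lies in $u\mathbb{F}_{p}^{\ast}$ we have $\deg u\,r(x)=\deg r(x)<\deg\overline{a(x)}=\deg a(x)$, which contradicts minimality of $\deg a(x)$ unless $u\,r(x)=0$, i.e.\ $r(x)=0$; the claim follows.

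For $\overline{a(x)}\mid g(x)\bmod u$, I would simply left-multiply the codeword $g(x)+up(x)\in C$ by $u$. Because $u^{2}=0$ this gives $u\bigl(g(x)+up(x)\bigr)=u\,g(x)=u\,f_{1}(x)\in C$, and $\deg f_{1}(x)=\deg g(x)<n$. Applying the auxiliary fact with $h=f_{1}$ yields $\overline{a(x)}\mid f_{1}(x)$, which is exactly $\overline{a(x)}\mid g(x)\bmod u$.

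For the second assertion I would work in $R_{n}$. Since $C$ is a left $R[x;\theta]$-submodule of $R_{n}$ and $g(x)+up(x)\in C$, we get $k(x)\ast\bigl(g(x)+up(x)\bigr)\in C$; but $k(x)\ast g(x)=x^{n}-1\equiv 0$ in $R_{n}$, so $k(x)\ast up(x)\in C$. Using associativity and Lemma~\ref{partaker}, $k(x)\ast up(x)=\bigl(k(x)\ast u\bigr)\ast p(x)=u\,k'(x)p(x)$, where $k'(x)\in\mathbb{F}_{p}[x]$ is the partaker of $k(x)$. Reducing the $\mathbb{F}_{p}[x]$-polynomial $k'(x)p(x)$ modulo $x^{n}-1$ ---which is legitimate since $x^{n}-1\in\mathbb{F}_{p}[x]$ is central and commutes past the factor $u$--- represents this element of $R_{n}$ as $u\,M(x)$ with $M(x)\in\mathbb{F}_{p}[x]$, $\deg M<n$. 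Hence $u\,M(x)\in C$, and the auxiliary fact (now with $h=M$) gives $u\,M(x)\in\langle u\overline{a(x)}\rangle$, i.e.\ $\dfrac{x^{n}-1}{g}\,up\in\langle ua\rangle$.

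The genuine content is all inside the auxiliary fact, and the only point that needs watching ---exactly as in Lemma~\ref{nonmonic}--- is that multiplication by $u$ does not lower the degree, since the leading coefficient of a nonzero $u\cdot(\text{polynomial over }\mathbb{F}_{p})$ is a nonzero element of $u\mathbb{F}_{p}^{\ast}\subseteq R$; this is what makes the contradiction with the minimality of $\deg a(x)$ valid. I do not anticipate any deeper difficulty beyond this routine bookkeeping and the reduction modulo $x^{n}-1$ in the second part.
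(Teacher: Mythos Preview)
Your proof is correct and follows essentially the same approach as the paper's: both isolate the key auxiliary fact that any element of $C$ of the form $u\,h(x)$ with $h\in\mathbb{F}_{p}[x]$ must lie in $\langle u\overline{a(x)}\rangle$ (via division by $\overline{a(x)}$ and the minimality of $\deg a$), then apply it first to $u\bigl(g(x)+up(x)\bigr)=u\,f_{1}(x)$ and second to $k(x)\ast\bigl(g(x)+up(x)\bigr)=k(x)\ast up(x)=u\,k'(x)p(x)$ in $R_{n}$. Your write-up is in fact slightly more careful than the paper's in making the degree bound $\deg h<n$ explicit and in spelling out the reduction modulo $x^{n}-1$ before invoking the auxiliary fact.
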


\begin{proof}
Suppose $C=\left\langle g(x)+up(x),a(x)\right\rangle =\left\langle
g(x)+up(x),u\overline{a(x)}\right\rangle $ as in Lemma \ref{monic2}. Let $%
uc(x)\in C$ where $c(x)\in\mathbb{F}_{p}[x].$ Using the division algorithm
one can write 
\begin{equation*}
c(x)=q(x)\overline{a(x)}+r(x), 
\end{equation*}
where $r(x)=0$ or $\deg r(x)<\deg\overline{a(x)}.$ Hence, 
\begin{eqnarray*}
uc(x) & = & uq(x)\overline{a(x)}+ur(x) \\
& = & q^{\prime}(x)u\overline{a(x)}+ur(x).
\end{eqnarray*}
This implies that $ur(x)\in C.$ This is a contradiction since $\deg
ur(x)=\deg r(x)<\deg a(x)=\deg\overline{a(x)}.$ Hence, $ur(x)=0$ and $r(x)=0$
because $r(x)\in\mathbb{F}_{p}[x].$ This implies $uc(x)=uq(x)\overline{%
a(x)}=q^{\prime}(x)u\overline{a(x)}\in\left\langle ua\right\rangle .$
Therefore, for any polynomial of the form $uc(x)\in C,$ we have $c(x)=q(x)%
\overline{a(x)}$ and $uc(x)\in\left\langle ua\right\rangle$. Since 
$u\left(g(x)+up(x)\right)=ug(x)\in C$,  we get that $\overline{a(x)}|g(x)%
\func{mod}u.$ Also $\dfrac{x^{n}-1}{g}\left(g(x)+up(x)\right)=\dfrac{x^{n}-1%
}{g}up(x)=u\left(\dfrac{x^{n}-1}{g}\right)^{\prime}p(x)\in C.$ Hence $\dfrac{%
x^{n}-1}{g}up\in\left\langle ua\right\rangle $ $.$
\end{proof}

We summarize the results of Lemmas \ref{nonmonic}, \ref{monic1}, \ref{monic2}
and \ref{monic3} in the following theorem that classifies all skew cyclic
codes over the ring $R.$

\begin{theorem}
\label{classifications}Let $C$ be a nonzero skew cyclic code over the ring $%
R.$ Then $C$ satisfies one of the following cases:

\begin{enumerate}
\item $C$ has no monic polynomials. Then $C=\left\langle u\overline{a(x)}%
\right\rangle $ where $\overline{~a(x)}$ is a polynomial of minimal degree
in $C$ and $x^{n}-1=\overline{b(x)}~\overline{a(x)}$ in $\ \mathbb{F}%
_{p}[x]. $

\item $C$ has a monic polynomial $g(x)$ of minimal degree in $C.$ Then $%
C=\left\langle g(x)\right\rangle $ where $g(x)$ is a polynomial of minimal
degree in $C$ and $x^{n}-1=k(x)\ast g(x)$ in $R_{n}.$

\item All monic polynomials in $C$ are not of minimal degree. Then we have $%
C=\left\langle g(x)+up(x),a(x)\right\rangle =\left\langle g(x)+up(x),u%
\overline{a(x)}\right\rangle ,$ where $a(x)$ is a polynomial of minimal
degree in $C$ which is not monic, $g(x)$ is a monic polynomial in $C$ of
minimal degree among all monic polynomials in $C,$ $x^{n}-1=k(x)g(x)$ in $%
R[x;\theta],$ $x^{n}-1=\overline{b(x)}~\overline{a(x)}$ in $\mathbb{F}%
_{p}[x],$ $\overline{a(x)}|g(x)\func{mod}u$ and $\dfrac{x^{n}-1}{g}%
up\in\left\langle ua\right\rangle .$
\end{enumerate}
\end{theorem}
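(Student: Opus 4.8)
The plan is to recognize that Theorem~\ref{classifications} is purely a consolidation of Lemmas~\ref{nonmonic}, \ref{monic1}, \ref{monic2} and \ref{monic3}: the proof amounts to organizing a nonzero skew cyclic code $C$ into an exhaustive, mutually exclusive trichotomy and then quoting the matching lemma in each branch. So first I would recall the dichotomy stated just before Lemma~\ref{nonmonic}, namely that either $C$ contains no monic polynomial (Case~1) or it contains at least one (Case~2).

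In Case~1, Lemma~\ref{nonmonic} applies directly and yields $C=\left\langle u\overline{a(x)}\right\rangle$, where $\overline{a(x)}$ is a polynomial of minimal degree in $C$ and $x^{n}-1=\overline{b(x)}\,\overline{a(x)}$ in $\mathbb{F}_{p}[x]$; this is exactly item~(1). For Case~2 I would further split according to whether a polynomial of minimal degree in $C$ may be chosen monic. If some polynomial $g(x)$ of minimal degree in $C$ is monic, then Lemma~\ref{monic1} gives $C=\left\langle g(x)\right\rangle$ with $x^{n}-1=k(x)\ast g(x)$ in $R_{n}$, which is item~(2). If no polynomial of minimal degree is monic, then, letting $a(x)$ be a polynomial of minimal degree in $C$ and $g(x)$ be a monic polynomial of minimal degree among the monic polynomials of $C$, Lemma~\ref{monic2} supplies the generating set $C=\left\langle g(x)+up(x),u\overline{a(x)}\right\rangle$ together with $x^{n}-1=k(x)g(x)$ in $R[x;\theta]$, $x^{n}-1=\overline{b(x)}\,\overline{a(x)}$ in $\mathbb{F}_{p}[x]$ and $\deg\overline{a(x)}<\deg g(x)$, while Lemma~\ref{monic3} adds the two remaining relations $\overline{a(x)}|g(x)\func{mod}u$ and $\dfrac{x^{n}-1}{g}up\in\left\langle ua\right\rangle$. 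Collecting these is precisely item~(3).

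Finally I would note that the three cases partition all nonzero skew cyclic codes over $R$ --- the split being ``$C$ has no monic polynomial'' versus ``$C$ has a monic polynomial of minimal degree'' versus ``$C$ has monic polynomials but none of minimal degree'' --- and that they are pairwise incompatible, so every nonzero $C$ falls into exactly one case. There is no genuine obstacle in this argument: all the substantive work has already been done in Lemmas~\ref{nonmonic}--\ref{monic3}, and the only care needed is to verify that the hypotheses of each lemma are met in the corresponding branch.
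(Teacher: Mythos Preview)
Your proposal is correct and matches the paper's own approach exactly: the paper's proof of Theorem~\ref{classifications} is the one-line ``The proof follows from Lemmas~\ref{nonmonic}, \ref{monic1}, \ref{monic2} and \ref{monic3},'' and you have simply spelled out the trichotomy that organizes those lemmas into the three cases. There is nothing to add or correct.
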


\begin{proof}
The proof follows from Lemmas \ref{nonmonic}, \ref{monic1}, \ref{monic2} and %
\ref{monic3}.
\end{proof}

\section{\label{sec:Minimal-spanning-sets}Minimal spanning sets for skew
cyclic codes over $R$}

In this section we provide minimal generating sets for skew cyclic codes
over $R$. The generating sets will help in finding the cardinality of each
code. Moreover, they will be useful in describing an encoding algorithm for
these codes.

\begin{theorem}
\label{generating sets}Let $C$ be a nonzero skew cyclic code over the ring $%
R.$

\begin{enumerate}
\item If $C=\left\langle u\overline{a(x)}\right\rangle $ where $\overline{%
~a(x)}$ is a polynomial of minimal degree $r$ in $C$ and $x^{n}-1=\overline{%
b(x)}~\overline{a(x)}$ in $\ \mathbb{F}_{p}[x],$ then 
\begin{equation*}
\beta=\left\{ u\overline{a(x)},xu\overline{a(x)},\ldots,x^{n-r-1}u\overline{%
a(x)}\right\} ,
\end{equation*}
forms a minimal generating set for $C$ and $\left\vert C\right\vert
=p^{n-r}. $

\item If $C=\left\langle g(x)\right\rangle $ where $g(x)$ is a polynomial of
minimal degree $r$ in $C$ and $x^{n}-1=k(x)\ast g(x)$ in $R_{n},$ then 
\begin{equation*}
\beta=\left\{ g(x),xg(x),\ldots,x^{n-r-1}g(x)\right\} ,
\end{equation*}
forms a minimal generating set for $C$ and $\left\vert C\right\vert
=\left(p^{2}\right)^{n-r}.$

\item If $C=\left\langle g(x)+up(x),a(x)\right\rangle =\left\langle
g(x)+up(x),u\overline{a(x)}\right\rangle ,$ where $a(x)$ is a polynomial of
minimal degree $t$ in $C$ which is not monic, $g(x)$ is a monic polynomial
in $C$ of minimal degree $r$ among all monic polynomials in $C,$ $%
x^{n}-1=k(x)\ast g(x)$ in $R[x;\theta],$ $x^{n}-1=\overline{b(x)}~\overline{%
a(x)}$ in $\mathbb{F}_{p}[x],$ $\overline{a(x)}|g(x)\func{mod}u$ and $\dfrac{%
x^{n}-1}{g}up\in\left\langle ua\right\rangle .$ Then 
\begin{equation*}
\beta=\left\{ 
\begin{array}{c}
g(x)+up(x),~x\ast \left(g(x)+up(x)\right),\ldots,x^{n-r-1}\ast
\left(g(x)+up(x)\right), \\ 
u\overline{a(x)},~xu\overline{a(x)},\ldots,~x^{r-t-1}u\overline{a(x)}%
\end{array}%
\right\} ,
\end{equation*}
forms a minimal generating set for $C$ and $\left\vert C\right\vert
=\left(p^{2}\right)^{n-r}p^{r-t}.$
\end{enumerate}
\end{theorem}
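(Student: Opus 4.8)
The plan is to treat the three cases separately, using a common principle: the ``$u$-torsion part'' of $C$ can be identified with an ordinary cyclic code over $\mathbb{F}_p$ (via multiplication by $u$ together with Lemma~\ref{partaker}), while any ``monic part'' is controlled by the right division algorithm (Theorem~\ref{Division}) and a leading-coefficient argument. In every case minimality of $\beta$ will follow by computing $\dim_{\mathbb{F}_p}(C/uC)$: since $R$ is local with residue field $\mathbb{F}_p$, every $R$-generating set of $C$ has at least $\dim_{\mathbb{F}_p}(C/uC)$ elements, and one checks directly that $|\beta|=\dim_{\mathbb{F}_p}(C/uC)$.

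For part~(1), the map $v(x)\mapsto uv(x)$ is a bijection from the classical cyclic code $\langle\overline a\rangle\subseteq\mathbb{F}_p[x]/\langle x^n-1\rangle$ onto $C=\langle u\overline a\rangle$ (surjectivity by Lemma~\ref{partaker}, since a left multiple of $u\overline a$ equals $u$ times an ordinary multiple of $\overline a$; injectivity because a polynomial of degree $<n$ killed by $u$ vanishes, while nonzero elements of $\langle x^n-1\rangle$ in $R[x;\theta]$ have degree $\ge n$). As $\overline a\mid x^n-1$, the standard $\mathbb{F}_p$-basis $\{\overline a,x\overline a,\dots,x^{\,n-r-1}\overline a\}$ of $\langle\overline a\rangle$ transfers to $\beta$, so $\beta$ spans $C$ and $|C|=p^{\,n-r}$; since $uC=0$, $\dim_{\mathbb{F}_p}(C/uC)=n-r=|\beta|$. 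For part~(2), one first upgrades the hypothesis to a true factorization $x^n-1=k(x)\ast g(x)$ in $R[x;\theta]$ with $k$ monic of degree $n-r$ (Lemma~\ref{monic1}: right-divide $x^n-1$ by the monic $g$ and use minimality of $\deg g$ to force remainder $0$). Then any $c=f\ast g\in C$ reduces, via $f=q\ast k+\rho$ with $\deg\rho<n-r$, to $c\equiv\rho\ast g\pmod{x^n-1}$, an $R$-combination of $\beta$; and the $R$-linear map $R^{\,n-r}\to C$, $(\rho_i)\mapsto\sum_i\rho_i(x^i\ast g)$, is injective because the $x^i\ast g$ are monic of pairwise distinct degrees $r,\dots,n-1$, so a nontrivial combination has nonzero leading coefficient. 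Hence $C\cong R^{\,n-r}$ is free, $|C|=(p^{2})^{\,n-r}$, and $\beta$ is minimal.

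Part~(3) is the crux and combines both mechanisms. I would first record that every codeword of degree $<r$ fails to be monic, hence has the form $u\overline c$ with $\overline c\in\langle\overline a\rangle$ (arguing as in the proofs of Lemmas~\ref{nonmonic} and \ref{monic2} and invoking Lemma~\ref{monic3}); the classical description of the degree-$<r$ elements of $\langle\overline a\rangle$ then puts every such codeword in the $\mathbb{F}_p$-span of $\{u\overline a,\dots,x^{\,r-t-1}u\overline a\}$. Next, for arbitrary $c\in C$ with $\deg c<n$, right division by the monic polynomial $g+up$ (which has minimal degree among monic codewords) yields $c=q\ast(g+up)+r_2$ with $\deg q<n-r$ and $\deg r_2<r$; the remainder $r_2$ is a codeword, hence lies in the second block of $\beta$, while $q\ast(g+up)=\sum_{i<n-r}q_i\,(x^i\ast(g+up))$ lies in the first block, so $\beta$ spans $C$. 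This decomposition is unique---if $q\ast(g+up)+r_2=q'\ast(g+up)+r_2'$ with $q\neq q'$, then $(q-q')\ast(g+up)$ has degree $\ge r$ (leading term of $q-q'$ times the monic $g+up$), contradicting $\deg(r_2'-r_2)<r$---so $C$ is in bijection with $R^{\,n-r}\times\mathbb{F}_p^{\,r-t}$ and $|C|=(p^{2})^{\,n-r}p^{\,r-t}$. Finally, $uC$ equals the cyclic code $u\langle g\bmod u\rangle$, of $\mathbb{F}_p$-dimension $n-r$, so $\dim_{\mathbb{F}_p}(C/uC)=\bigl(2(n-r)+(r-t)\bigr)-(n-r)=n-t=|\beta|$, which yields minimality.

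The hard part is part~(3): one must keep the monic contribution and the $u$-torsion contribution cleanly separated, and in particular verify that the division remainder $r_2$---and no leftover high-degree multiple of $u\overline a$---lands in the span of only the $r-t$ low-degree generators $x^j u\overline a$ rather than of all $n-t$ of them. This is precisely what the structural conditions $\overline a\mid g\bmod u$ and $\tfrac{x^n-1}{g}\,up\in\langle u\overline a\rangle$ from Lemma~\ref{monic3} (built into the classification Theorem~\ref{classifications}) are used for, and they are also what makes the uniqueness of the above decomposition---hence the cardinality count---go through.
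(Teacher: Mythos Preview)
Your proposal is correct and rests on the same core ingredients as the paper's proof---the right division algorithm (Theorem~\ref{Division}), the partaker lemma (Lemma~\ref{partaker}), and the identification of the $u$-torsion part with an ordinary cyclic code over $\mathbb{F}_p$. Two points of your write-up differ from, and in fact sharpen, the paper's argument. First, in Case~3 you divide an arbitrary codeword $c$ directly by the monic generator $g+up$, whereas the paper instead starts from a presentation $c=s_1\ast(g+up)+s_2\ast u\overline{a}$ and reduces the coefficient $s_1$ modulo $k=(x^n-1)/g$; your route avoids having to separately chase the leftover term $q\ast k\ast up$ back into $\langle u\overline{a}\rangle$ and is cleaner. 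Second, for minimality the paper simply asserts that ``none of the elements in $\beta$ is a linear combination of the others,'' while you invoke the local-ring Nakayama bound $|\beta|\ge\dim_{\mathbb{F}_p}(C/uC)$ and verify equality in each case; this is a genuine upgrade in rigor, since an $R$-generating set can be ``minimal'' in the weak sense (no element redundant) without having the smallest possible cardinality, and your argument rules that out.
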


\begin{proof}
We will prove Cases 1 and 3. Case 2 has similar proof.

\begin{enumerate}
\item Suppose $c(x)\in\left\langle u\overline{a(x)}\right\rangle $ where $%
\overline{~a(x)}$ is a polynomial of minimal degree $r$ in $C$ and $x^{n}-1=%
\overline{b(x)}~\overline{a(x)}$ in $\ \mathbb{F}_{p}[x].$ Then $c(x)=s(x)u%
\overline{a(x)}.$ Note that if $s(x)=s_{1}(x)+us_{2}(x),$ then $s(x)u%
\overline{a(x)}=\left(s_{1}(x)+us_{2}(x)\right)u\overline{a(x)}=s_{1}(x)u%
\overline{a(x)}+us_{2}(x)u\overline{a(x)}=s_{1}(x)u\overline{a(x)}%
+u^{2}s_{3}(x)\overline{a(x)}=s_{1}(x)u\overline{a(x)}.$ Hence, we may
assume that $s(x)=s_{1}(x)\in\mathbb{F}_{p}[x]$ and $c(x)=s_{1}(x)u\overline{%
a(x)}=us_{3}(x)\overline{a(x)}$ (by Lemma \ref{partaker}) where $\deg
s_{1}(x)=\deg s_{3}(x).$ If $\deg s_{1}(x)\leq n-r-1,$ then $c(x)=s(x)u%
\overline{a(x)}\in$span$(\beta).$ Otherwise, by the division algorithm there
are unique polynomials $q(x),~r(x)$ such that 
\begin{equation*}
s_{3}(x)=q(x)\dfrac{x^{n}-1}{\overline{a(x)}}+r(x), 
\end{equation*}
where $r(x)=0$ or $\deg r(x)<\deg\dfrac{x^{n}-1}{\overline{a(x)}}=n-r.$
Hence, 
\begin{eqnarray*}
c(x) & = & s(x)u\overline{a(x)}=s_{1}(x)u\overline{a(x)}=us_{3}(x)\overline{%
a(x)} \\
& = & u\left(q_{1}(x)\dfrac{x^{n}-1}{\overline{a(x)}}+r(x)\right)\overline{%
a(x)} \\
& = & uq_{1}(x)\dfrac{x^{n}-1}{\overline{a(x)}}\overline{a(x)}+ur(x)%
\overline{a(x)} \\
& = & ur(x)\overline{a(x)} \\
& = & r^{\prime}(x)\overline{ua(x)}
\end{eqnarray*}

where $\deg r^{\prime}(x)=\deg r(x)<\deg\dfrac{x^{n}-1}{\overline{a(x)}}=n-r.
$ Hence, $\beta$ spans the code $C.$ From the construction of the elements
in the set $\beta,$ it is clear that none of the elements is a linear
combination of the others. Therefore, $\beta$ forms a minimal generating set
for $C.$ Since $s_{1}(x)\in\mathbb{F}_{p}[x]$, we get that $\left\vert
C\right\vert =p^{n-r}.$

\item The proof is similar to Case 1.

\item Suppose that $c(x)\in C=\left\langle g(x)+up(x),a(x)\right\rangle
=\left\langle g(x)+up(x),u\overline{a(x)}\right\rangle .$ Then $%
c(x)=s_{1}(x)\ast \left(g(x)+up(x)\right)+s_{2}(x)\ast u\overline{a(x)}.$ If $\deg
s_{1}(x)\leq n-r-1,$ then $s_{1}(x)\ast \left(g(x)+up(x)\right)\in$ span($\beta$).
Otherwise, by Theorem \ref{Division} 
\begin{equation*}
s_{1}(x)=q(x)\left(\frac{x^{n}-1}{g(x)}\right)+r(x), 
\end{equation*}
where $r(x)=0$ or $\deg r(x)\leq n-r-1.$ Hence, 
\begin{eqnarray*}
s_{1}(x)\ast \left(g(x)+up(x)\right) & = & \left(q(x)\left(\frac{x^{n}-1}{g(x)}%
\right)+r(x)\right)\ast \left(g(x)+up(x)\right) \\
& = & q(x)\left(\frac{x^{n}-1}{g(x)}\right)\ast \left(g(x)+up(x)\right)+r(x)%
\ast \left(g(x)+up(x)\right) \\
& = & q(x)\left(\frac{x^{n}-1}{g(x)}\right)\ast up(x)+r(x)\ast \left(g(x)+up(x)\right)
\\
& = & uq_{1}(x)p(x)+r(x)\ast \left(g(x)+up(x)\right).
\end{eqnarray*}
Hence, 
\begin{eqnarray*}
c(x) & = & s_{1}(x)\ast \left(g(x)+up(x)\right)+s_{2}(x)\ast u\overline{a(x)} \\
& = & uq_{1}(x)p(x)+r(x)\ast\left(g(x)+up(x)\right)+s_{2}(x)\ast u\overline{a(x)} \\
& = & uq_{1}(x)p(x)+r(x)\ast \left(g(x)+up(x)\right)+us_{2}^{\prime}(x)\overline{%
a(x)} \\
& = & u\left(q_{1}(x)p(x)+s_{2}^{\prime}(x)\overline{a(x)}%
\right)+r(x)\ast \left(g(x)+up(x)\right)
\end{eqnarray*}
Since $r(x)=0$ or $\deg r(x)\leq n-r-1,$  $r(x)\ast\left(g(x)+up(x)\right)\in
$ span($\beta$). Hence, we only need to show that $uk(x)\in$ span($\beta$) for
any $uk(x)\in C.$ Suppose that $uk(x)\in C$ where $\deg k(x)\geq\deg g(x).$
Then 
\begin{equation*}
k(x)=q_{2}(x)g(x)+r_{2}(x), 
\end{equation*}
where $r_{2}(x)=0$ or $\deg r_{2}(x)<\deg g(x)$ and $\deg q_{2}(x)=\deg
k(x)-r\leq n-r-1.$ Hence $uk(x)=uq_{2}(x)g(x)+ur_{2}(x)=q_{2}^{%
\prime}(x)ug(x)+ur_{2}(x)=q_{2}^{\prime}(x)u\left(g(x)+up(x)%
\right)+ur_{2}(x).$ Since $q_{2}^{\prime}(x)u\left(g(x)+up\right)\in$span($%
\beta$),  it suffices to show that $ur_{2}(x)\in$span($\beta$) where $%
r_{2}(x)=0$ or $\deg r_{2}(x)<\deg g(x)$ and $\deg r_{2}(x)\geq\deg\overline{%
a(x)}.$ By the proof of Lemma \ref{monic3}, we know that any element of the
form $ur_{2}(x)$ belongs to $\left\langle ua(x)\right\rangle .$ Hence, $%
ur_{2}(x)=s_{4}(x)ua(x)$ and $\deg ur_{2}(x)<\deg g(x)$ and $\deg
ur_{2}(x)\geq\deg\overline{a(x)}$ $.$ Thus 
\begin{equation*}
ur_{2}(x)=\alpha_{0}u\overline{a(x)}+\alpha_{1}xu\overline{a(x)}%
+\ldots+\alpha_{r-t-1}x^{r-t-1}u\overline{a(x)}. 
\end{equation*}
Therefore, $\beta$ spans $C.$ Since none of the elements in $\beta$ is a
linear combinations of the other elements, we conclude that $\beta$ is a
minimal generating set for the code $C$ and $\left\vert C\right\vert
=\left(p^{2}\right)^{n-r}p^{r-t}.$
\end{enumerate}
\end{proof}

\section{\label{sub:The-Encoding-and}The Encoding and Decoding of the codes}

Based on Theorem \ref{generating sets}, we can develop an encoding algorithm
for these codes as follows:

\begin{theorem}
\label{class1} Let $C$ be a nonzero skew cyclic code over the ring $R.$

\begin{enumerate}
\item If $C=\left\langle u\overline{a(x)}\right\rangle $ where $\overline{%
~a(x)}$ is a polynomial of minimal degree $r$ in $C$ and $x^{n}-1=\overline{%
b(x)}~\overline{a(x)}$ in $\ \mathbb{F}_{p}[x],$ then any codeword $c(x)$ in 
$C$ ~is encoded as 
\begin{equation*}
c(x)=i(x)u\overline{a(x)},
\end{equation*}
where $i(x)\in\mathbb{F}_{p}[x]$ is a polynomial of degree $\leq$ $n-r-1.$

\item If $C=\left\langle g(x)\right\rangle $ where $g(x)$ is a polynomial of
minimal degree $r$ in $C$ and $x^{n}-1=k(x)g(x)$ in $R_{n},$ then any
codeword $c(x)$ in $C$ ~is encoded as 
\begin{equation*}
c(x)=(i(x)+uq(x))g(x),
\end{equation*}
where $i(x)+uq(x)\in R[x;\theta]$ is a polynomial of degree $\leq n-r-1$.

\item Let $C=\left\langle g(x)+up(x),a(x)\right\rangle =\left\langle
g(x)+up(x),u\overline{a(x)}\right\rangle ,$ where $a(x)$ is a polynomial of
minimal degree $\tau$ in $C$ which is not monic, $g(x)$ is a monic
polynomial in $C$ of minimal degree $r$ among all monic polynomials in $C,$ $%
x^{n}-1=k(x)g(x)$ in $R[x;\theta],$ $x^{n}-1=\overline{b(x)}~\overline{a(x)}$
in $\mathbb{F}_{p}[x],$ $\overline{a(x)}|g(x)\func{mod}u$ and $\dfrac{x^{n}-1%
}{g}up\in\left\langle ua\right\rangle .$ Then, any codeword $c(x)$ in $C$ is
encoded as 
\begin{equation*}
c(x)=\left(i\left(x\right)+uq(x)\right)\left(g(x)+up(x)\right)+j(x)u%
\overline{a(x)},
\end{equation*}
where $i(x)+uq(x)\in R[x;\theta]$ is a polynomial of degree $\leq n-r-1$ and 
$j(x)\in\mathbb{F}_{p}[x]$ is a polynomial of degree$\leq r-\tau-1.$
\end{enumerate}
\end{theorem}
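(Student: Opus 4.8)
The plan is to deduce this directly from Theorem \ref{generating sets}: once a minimal generating set $\beta$ for $C$ is in hand, every codeword is a linear combination of the elements of $\beta$ with scalars drawn from $R$ (for the ``$g$-type'' generators) or from $\mathbb{F}_{p}$ (for the ``$u\overline{a}$-type'' generators), and collecting those scalars into polynomials produces exactly the asserted encoding maps, with the degree bounds forced by the number of elements of $\beta$. So I would treat the three cases of Theorem \ref{generating sets} in turn and simply read off the information polynomial(s) in each case.

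For Case 1, recall that $\beta=\{\,x^{k}\ast u\overline{a(x)}:0\le k\le n-r-1\,\}$. Take an arbitrary $c(x)\in C$ and write it, as in the proof of Theorem \ref{generating sets}, as $c(x)=\sum_{k=0}^{n-r-1}\lambda_{k}\bigl(x^{k}\ast u\overline{a(x)}\bigr)$; since multiplication by $u$ annihilates the $u$-part of any coefficient (because $u^{2}=0$), the $\lambda_{k}$ may be taken in $\mathbb{F}_{p}$, and then $i(x):=\sum_{k=0}^{n-r-1}\lambda_{k}x^{k}\in\mathbb{F}_{p}[x]$ has degree $\le n-r-1$ and satisfies $c(x)=i(x)\,u\overline{a(x)}$ by distributivity of $\ast$. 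Case 2 is the same argument with the generators $x^{k}g(x)$, except that here the scalars range over all of $R$, so they assemble into a polynomial $i(x)+uq(x)\in R[x;\theta]$ of degree $\le n-r-1$ with $c(x)=\bigl(i(x)+uq(x)\bigr)g(x)$.

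For Case 3, $\beta$ splits into two blocks: $x^{k}\ast(g(x)+up(x))$ for $0\le k\le n-r-1$ with scalars in $R$, and $x^{k}u\overline{a(x)}$ for $0\le k\le r-\tau-1$ with scalars that (again because $u^{2}=0$) may be taken in $\mathbb{F}_{p}$. Writing $c(x)$ as the sum of the two corresponding linear combinations and gathering coefficients yields $i(x)+uq(x)\in R[x;\theta]$ with $\deg\bigl(i(x)+uq(x)\bigr)\le n-r-1$ and $j(x)\in\mathbb{F}_{p}[x]$ with $\deg j(x)\le r-\tau-1$ such that $c(x)=\bigl(i(x)+uq(x)\bigr)\bigl(g(x)+up(x)\bigr)+j(x)\,u\overline{a(x)}$, which is the claimed form. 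To finish, I would observe that each of these maps is in fact a bijection: the number of admissible information tuples is $p^{n-r}$, $(p^{2})^{n-r}$, and $(p^{2})^{n-r}p^{r-\tau}$ respectively, which matches $|C|$ as computed in Theorem \ref{generating sets}, so the surjection just exhibited is injective and the encoding carries no loss of information.

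I do not expect a genuine obstacle here; the content is entirely bookkeeping. The one point to handle with a little care is keeping the degree bounds tight and justifying that the scalars on the $u\overline{a(x)}$-generators can be reduced to $\mathbb{F}_{p}$. Both of these are already carried out inside the proof of Theorem \ref{generating sets} --- via the Right Division Algorithm (Theorem \ref{Division}), Lemma \ref{partaker}, and the observation that $s(x)u\overline{a(x)}=s_{1}(x)u\overline{a(x)}$ whenever $s(x)=s_{1}(x)+us_{2}(x)$ --- so in this proof they need only be invoked, not re-derived.
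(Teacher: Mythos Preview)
Your proposal is correct and takes exactly the same approach as the paper: the paper's own proof is the single sentence ``The proof follows from Theorem \ref{generating sets},'' and what you have written is simply a careful unpacking of that reference. Your added remark on bijectivity via the cardinality count is a nice touch (not in the paper's proof) and confirms that the encoding maps are lossless.
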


\begin{proof}
The proof follows from Theorem \ref{generating sets}.
\end{proof}


Now, we describe a decoding algorithm when one considers the case (iii) in
Theorem \ref{class1}.


Suppose that a string of $r-\tau$ symbols $J=(j_{0},j_{1},\cdots,j_{r-%
\tau-1})\in\mathbb{F}_{p}^{r-\tau}$ and a string of symbols $I=\big(%
(i_{0}+uq_{0}),\cdots,(i_{n-r-1}+uq_{n-r-1})\big)\in\big(\mathbb{F}_{p}+u%
\mathbb{F}_{p}\big)^{n-r}$ are the inputs of the encoder. So, according to
Theorem \ref{class1}, the encoding process will be as follows 
\begin{align}
encod(J,I)=\Big(uj(x)\overline{a(x)}+\big(i(x)+uq(x)\big)\big(g(x)+up(x)\big)%
\Big)(mod\quad x^{n}-1).  \label{encod}
\end{align}
The encoded string is of length $n$ symbols over $\mathbb{F}_{p}+u\mathbb{F}%
_{p}$ which is transmitted through the channel.


The decoder receives a string of length $n$ over $\mathbb{F}_{p}+u\mathbb{F}%
_{p}$. Let us represent that string by $V=\Big((\alpha _{0},c_{0}),\cdots
,(\alpha _{n-1},c_{n-1})\Big)$. Define $v(x)=\sum_{i=0}^{n-1}(\alpha
_{i}+uc_{i})x^{i}$. Since $g+up$ is a monic polynomial, we may observe from
Lemma \ref{Division} that there exist $l_{1}+ul_{2},r_{1}+ur_{2}\in R_{n}$
such that $v=(l_{1}+ul_{2})(g+up)+r_{1}+ur_{2}$ and $\deg
(r_{1}+ur_{2})<\deg (g+up)$ . Also, there exist $s,t\in \frac{\mathbb{F}%
_{p}[x]}{<x^{n}-1>}$ such that $r_{2}=t\overline{a}+s$. Thus, $%
v=(l_{1}+ul_{2})(g+up)+r_{1}+ut\overline{a}+us$. This results in the fact
that $r_{1},s$ are the error terms. Therefore, $\widehat{v}%
=l_{1}g+ul_{2}g+ul_{1}^{\prime }p+u\overline{a}t$ is a codeword. Our goal is
to find $j(x),(i+uq)(x)$ which are introduced in the encoding part. By a
comparison between $\widehat{v}$ and Equation (\ref{encod}), we observe that 
$l_{1}=i$ and $gl_{2}+l_{1}^{\prime }p+\overline{a}t=qg+i^{\prime }p+%
\overline{a}j$ (Note that these equations are in the ring $\frac{\mathbb{F}%
_{p}[x]}{<x^{n}-1>}$). Since $l_{1}=i$, we have $i^{\prime }p=l_{1}^{\prime
}p$. Therefore, we need to solve the equation $qg+\overline{a}j=l_{2}g+%
\overline{a}t$ to find $q,j$. Since $\overline{a}|g$, we may assume that $g=%
\overline{a}\overline{d}$. Therefore, $q\overline{d}+j=l_{2}\overline{d}+t$.
Thus, $j,q$ are in the form $t-\overline{d}z,l_{2}+z$ for some $z\in \frac{%
\mathbb{F}_{p}[x]}{x^{n}-1}$, respectively. Hence, there exists $A,B\in 
\mathbb{F}_{p}[x]$ such that 
\begin{align}
j(x)=& t(x)-\overline{d(x)}z(x)+A(x)(x^{n}-1) \\
q(x)=& l_{2}(x)+z(x)+B(x)(x^{n}-1).
\end{align}%
If $\overline{a}\overline{d}h=gh=x^{n}-1$, then 
\begin{align}
j(x)=& t(x)-\overline{d(x)}(z(x)+A(x)\overline{a(x)}h(x)) \\
q(x)=& l_{2}(x)+z(x)+B(x)(x^{n}-1).
\end{align}%
We know that $\deg (q)\leq k,\deg (j)\leq T$. Since $\deg (d)=\deg (g)-\deg
(a)>T$, $j$ can be uniquely found as $j=t-\overline{d}(z+A\overline{a}h)$,
which leads to finding$z+A\overline{a}h$. Since every $A,B$ leads to the
same answer for $j,q$ in $R_{n}$, we may assume $A=B=1$. Thus, $q=l_{2}+%
\frac{t-j}{\overline{d}}-\overline{a}h+(x^{n}-1)$. The outputs of the
decoding process are $j(x)$ and $i(x)+uq(x)$.

\section{\label{sec:Examples}Examples}

In this section, we provide examples of encoding and decoding of skew cyclic
codes.

\begin{example}
In this example, we see the steps to encode and decode with the proposed
algorithm. Let $n=6,r=3,\tau=2$ $C=<x^{3}+2x^{2}+(2+u)x+1+u,u(x+1)>$. The
number of symbols of output is $12$ and the number of symbols of input $8$.
Suppose that the encoder wants to transmit two strings $I=(1+u,3u,4+2u)\in(%
\mathbb{F}+p+u\mathbb{F}_{p})^{3}$ and $J=(2,2)\in(\mathbb{F}_{p})^{2}$. So 
\begin{align*}
& Encode((1+u,3u,4+2u),(3,2))= \\
& \quad((1+u)x^{2}+3ux+(4+2u))(x^{3}+2x^{2}+(2+u)x+1+u)+u(2x+2)(x+1) \\
& \quad=(1+u)x^{5}+2x^{4}+(1+u)x^{3}+(4+4u)x^{2}+3x+4+3u.
\end{align*}
So the encoder sends $(1+u,2,1+u,4+4u,3,4+3u)$ through the channel.

Now assume that the decoder receives $(1+2u,2,1+u,4+4u,3,4+3u)$. If $%
v(x)+uw(x)=(1+2u)x^{5}+2x^{4}+(1+u)x^{3}+(4+4u)x^{2}+3x+4+3u$, then the
syndromes $s_{1}(x),s_{2}(x)$ are as follows. 
\begin{align*}
& s_{1}(x)=v(x)(x^{3}-2x^{2}+2x-1)=0 \\
s_{2}(x)=& w(x)(x^{5}-x^{4}+x^{3}-x^{2}+x-1)=x^{5}-x^{4}+x^{3}-x^{2}+x-1.
\end{align*}%
One can see that 
\begin{equation*}
x^{5}(x^{5}-x^{4}+x^{3}-x^{2}+x-1)=x^{5}-x^{4}+x^{3}-x^{2}+x-1.
\end{equation*}%
So $s_{2}$ is equal to the syndrome of $x^{5}$. This means that the decoder
detect the error term $ux^{5}$. It remains to find $I,J$. To do this, we
have to divide $(1+u)x^{5}+2x^{4}+(1+u)x^{3}+(4+2u)x^{2}+(3+u)x+4+3u$ by $%
x^{3}+2x^{2}+(2+u)x+1+u$. One can see that 
\begin{align*}
& (1+u)x^{5}+2x^{4}+(1+u)x^{3}+(4+2u)x^{2}+(3+u)x+4+3u \\
& =((1+u)x^{2}+3ux+2u+4)(x^{3}+2x^{2}+(2+u)x+1+u) \\
& +u(2x^{2}+4x+2).
\end{align*}%
Since $2x^{2}+4x+2=(2x+2)(x+1)$, we can extract $I,J$. 
\end{example}

Finally, we propose an example of the decoding process in section \ref%
{sub:The-Encoding-and}.

\begin{example}
We want to study the principal skew cyclic codes with length $4$ over $%
F_{3}+uF_{3}$ with $\theta(u)=-u$. For this, one can see that the
composition of $x^{4}-1$ in $F_{3}$ is as follows. 
\begin{align}
x^{4}-1=(x+2)(x+1)(x^{2}+1)
\end{align}
There are two types of principal codes. If the generator is not monic, then $%
C=\left\langle u\overline{a(x)}\right\rangle $. Therefore, all of the codes
in this form are as follows. 
\begin{align*}
C_{1}= & <u(x+2)>\quad or\quad C_{2}=<u(x+1)>\quad or\quad C_{3}=<u(x^{2}+1)>
\\
C_{4}= & <u(x^{2}+2)>\quad or\quad C_{5}=<u(x^{3}+x^{2}+x+1)> \\
C_{6}= & <u(x^{3}+2x^{2}+x+2)>\quad or\quad C_{7}=<u>\quad or\quad C_{8}=<0>.
\end{align*}
The generator matrix $G$ and the parity check matrix $H$ of the code $C_{4}$
are as follows. 
\begin{align*}
G=%
\begin{bmatrix}
u & 0 & 2u & 0 \\ 
0 & 2u & 0 & u%
\end{bmatrix}%
\quad\quad H=%
\begin{bmatrix}
0 & 1 & 0 & 1 \\ 
1 & 0 & 1 & 0%
\end{bmatrix}%
\end{align*}
On the other hand, if the generator of the code is monic, then $%
C=\left\langle g(x)+up(x)\right\rangle $ $x^{n}-1=k\ast g$ for some $k\in
R_{n}$. An example of such code is $C=\langle (1+u)x^{2}+1+2u\rangle$. The
generator matrix and parity check matrix of this code are as follows. 
\begin{align*}
G=%
\begin{bmatrix}
1+u & 0 & 1+2u & 0 \\ 
0 & 1+2u & 0 & 1+u%
\end{bmatrix}%
\quad\quad H=%
\begin{bmatrix}
0 & 2+2u & 0 & 2+2u \\ 
1+u & 0 & 1+u & 0%
\end{bmatrix}%
\end{align*}
\end{example}

\section{\label{sec:Examples} Gray Images and Codes with Good Parameters}

One of our goals in this study was to obtain codes with good parameters over 
$\mathbb{F}_{p}$ from skew cyclic codes over $R$. To this end, we need a map
from $R$ to $\mathbb{F}_{p}^{\ell }$ for some positive integer $\ell $. We
use the map given in \cite{GrayMap}. For any integer $\ell ,~1\leq \ell \leq
p,$ define the Gray mapping as%
\begin{eqnarray*}
\varphi _{\ell } &:&R\rightarrow \mathbb{F}_{p}^{\ell } \\
\varphi _{\ell }\left( a+ub\right)  &=&\left( b,b+a,b+2a,\ldots b+\left(
\ell -1\right) a\right) .
\end{eqnarray*}

This map is naturally extended to a map $\varphi_{\ell}$ from $R^{n}$ to $%
\mathbb{F}_{p}^{\ell n}$.

For $c=a+ub\in R$, define the Gray weight of $c$ to be%
\begin{equation*}
w\left( c\right) =\left\{ 
\begin{array}{c}
0\text{ \ \ \ if }c=0 \\ 
\ell-1\text{ \ \ if }c=x-u\lambda x,~x\in \mathbb{F}_{p}^{\ast },~0\leq
\lambda \leq \ell-1 \\ 
\ell \text{ \ \ \ \ \ \ \ \ \ \ \ otherwise}%
\end{array}%
\right. 
\end{equation*}

It is shown in \cite{GrayMap} that $\varphi_{\ell}$ is a linear, distance
preserving map from $R^{n}$ to $\mathbb{F}_{p}^{\ell n}$. It is one-to-one
if $\ell \geq 2$. Therefore, if $C$ is a linear code over $R$ with
parameters $(n,M,d)$ where $d$ is the minimum Gray weight of $C$, then for $%
\ell \geq 2$ $\varphi_{\ell}(C)$ is a linear code over $\mathbb{F}_p$ with
parameters $(n\ell,M,d)$. If $C$ is a free code of dimension $k$ over $R$,
then $\varphi_{\ell}(C)$ is a linear code of dimension $2k$ over $\mathbb{F}%
_p$.

We searched over skew cyclic codes with generators of the form (2) in
Theorem \ref{classifications}. Hence they are free codes over $R$ with
dimension $k=n-\deg(g(x))$ where $g(x)$ is a divisor of $x^n-1$ in $%
R[x;\theta]$. Then we applied the Gray map described above to obtain linear
codes over $\mathbb{F}_p$. As a result of a computer search which is carried
out using Magma software, we obtained a number of codes with optimal or near
optimal parameters. We list below a sample of these codes.

\begin{example}
Let $p=3, n=8$. The polynomial $g=x^3 + ux^2 + x + 1$ divides $x^8-1$ over $%
R=\mathbb{F}_3+u\mathbb{F}_3$, hence it generates a free cyclic code of
dimension 5 over $R$. Its image $\varphi_2(C)$ is a ternary linear code with
parameters $[16,10,4]$ which, according to the database \cite{site}, is an
optimal code over $\mathbb{F}_3$.
\end{example}

\begin{example}
Let $p=3, n=6$. The polynomial $g=x^4 + 2x^3 + 2ux^2 + x + u + 2$ divides $%
x^6-1$ over $R=\mathbb{F}_3+u\mathbb{F}_3$, hence it generates a free cyclic
code of dimension 2 over $R$. Its image $\varphi_2(C)$ is a ternary linear
code with parameters $[12,4,6]$ and $\varphi_3(C)$ is a ternary linear code
with parameters $[18,4,11]$. Both of these codes are optimal according to 
\cite{site}.
\end{example}

\begin{example}
Let $p=3, n=11$. The polynomial $g=x^6 + x^4 + 2x^3 + 2x^2 + 2x + 1$ divides 
$x^{11}-1$ over $R=\mathbb{F}_3+u\mathbb{F}_3$, hence it generates a free
cyclic code of dimension 5 over $R$. Its image $\varphi_2(C)$ is a ternary
linear code with parameters $[22,10,6]$ which turns out to be a quasi-cyclic
code. According to the database of best known quasi-twisted codes (which
includes quasi-cyclic codes as a special case) \cite{qtDatabase}, this is a
new code in the class of quasi-twisted codes.
\end{example}

\begin{example}
Let $p=3, n=12$. The polynomial $g=x^5 + (u + 1)x^4 + ux^3 + 2ux^2 + (2u +
2)x + 2u + 2$ divides $x^{12}-1$ over $R=\mathbb{F}_3+u\mathbb{F}_3$, hence
it generates a free cyclic code of dimension 7 over $R$. Its image $%
\varphi_2(C)$ is a ternary linear code with parameters $[24,14,6]$ which,
according to \cite{site}, has the parameters of a best known ternary linear
code.
\end{example}

\begin{example}
Let $p=5, n=6$. The polynomial $g=x^4 + (3u + 4)x^3 + 4ux^2 + (2u + 1)x + u
+ 4$ divides $x^6-1$ over $R=\mathbb{F}_5+u\mathbb{F}_5$, hence it generates
a free cyclic code of dimension 2 over $R$. Its image $\varphi_3(C)$ is a
linear code with parameters $[18,4,12]$ over $\mathbb{F}_5$. According to
the database \cite{site}, this is an optimal linear code.
\end{example}

\begin{example}
Let $p=5, n=18$. The polynomial $g=x^3 + 2ux + 3u + 1$ divides $x^{18}-1$
over $R=\mathbb{F}_5+u\mathbb{F}_5$, hence it generates a free cyclic code
of dimension 15 over $R$. Its image $\varphi_2(C)$ is a linear code with
parameters $[36,30,4]$ over $\mathbb{F}_5$. According to the database \cite%
{site}, these are the parameters of the best known linear code for its
length and dimension over $\mathbb{F}_5$.
\end{example}

\section{\label{sec:Conclusion}Conclusion}

In this paper we studied skew cyclic codes over the ring $R=\mathbb{F}_{p}%
\mathbb{+}u\mathbb{F}_{p}$ where $p$ is an odd prime and $u^{2}=0.$ We have
classified skew cyclic codes of all lengths as left $R[x;\theta ]$%
-submodules of $R_{n}=R[x;\theta ]/\langle x^{n}-1\rangle .$ Our
classification is general and works for any value of $n$. Then we
constructed the set of generator polynomials for these codes. We also
provided encoding and decoding algorithms for these codes. Additionally, end
we presented examples of skew cyclic codes whose Gray images are linear
codes over $\mathbb{F}_{p}$ with optimal or near optimal parameters. 
We also provided a new code in the class of quasi-twisted codes.


\end{document}